\documentclass[journal,twoside,web]{ieeecolor}

\usepackage{generic}
\usepackage{cite}

\usepackage{amsmath,amssymb,amsfonts,amsthm,float}
\usepackage{algorithm}
\usepackage{algpseudocode}
\usepackage{graphicx}
\usepackage{textcomp}
 \usepackage{calc}
\usepackage{cases}
\usepackage{verbatim}
\usepackage{hyperref}
\usepackage{color, soul}
\usepackage{nccmath}
\usepackage{mathtools, nccmath}
\usepackage{dsfont}
\usepackage[normalem]{ulem}

\theoremstyle{plain}
\newtheorem{thm}{Theorem}

\newtheorem{prop}{Proposition}

\newtheorem{lemma}{Lemma}

\newtheorem{ex}{Example}

\theoremstyle{definition}
\newtheorem{defn}{Definition}

\newtheorem*{rem*}{Remark}

\theoremstyle{remark}

\newtheorem{fact}{Fact}

\def\arrvline{\hfil\kern\arraycolsep\vline\kern-\arraycolsep\hfilneg}

% general math notation
\newcommand{\R}{\mathbb{R}}
\newcommand{\N}{\mathbb{N}}
\renewcommand{\S}{\mathcal{S}}

\newcommand{\card}[1]{|#1|}

\renewcommand{\rm}[1]{\mathrm{#1}}

% general game thery notation
\newcommand{\ac}{\mathcal{A}} % actions
\newcommand{\p}{\mathcal{I}} % agent
\newcommand{\W}{\mathrm{W}} % welfare
\newcommand{\U}{\mathrm{U}_i} % utility
\newcommand{\G}{\mathrm{G}} % game

% best response notation
\newcommand{\br}{\text{Br}_i} % best response
\renewcommand{\ne}{a^{\mathrm{ne}}} % nash equilibrium
\newcommand{\NE}{\mathrm{NE}} % Nash set
\newcommand{\pob}{\mathrm{Eff}} % price of best response
\newcommand{\poa}{\mathrm{PoA}} % price of anarchy
\newcommand{\emp}{a^{\varnothing}}  % null action
\renewcommand{\k}{k}
\newcommand{\sol}{a^{\k}} % solution

% resource allocation games
\newcommand{\rr}{\mathcal{R}} % resources
\newcommand{\w}{w} % welfare rule
\newcommand{\wf}{\tilde{w}} % adapted rule
\newcommand{\ww}{\mathcal{W}}
\newcommand{\wfw}{\Gamma}
\newcommand{\wfwc}{\wfw_{\rm{id}}}
\newcommand{\setgmw}{\mathcal{G}_{\ww, \wfw}} % set of games

\newcommand{\f}{u} % utility rule
\newcommand{\fw}{\mathcal{U}}  % utility mechanism
\newcommand{\setgm}{\mathcal{G}_{\ww, \fw}} % set of games
 % set of games
\newcommand{\fwc}{\fw_{\rm{CI}}}
 % set of games
\newcommand{\cc}{\mathrm{C}}  % curvature

% f designs
\newcommand{\fmc}{\f^{\mathrm{mc}}} % marginal contribution f
\newcommand{\wbc}{\w^{b, \cc}}  % bent rule welfare
\newcommand{\fa}{\f^{\infty}}

% LP definitions
\newcommand{\abr}{a^{\mathrm{br}}}  % br action
\newcommand{\aopt}{a^{\mathrm{opt}}}  % opt action.  

% PoA LP def
\newcommand{\ay}{a}
\newcommand{\xx}{x}
\newcommand{\bz}{b}
\newcommand{\paval}{\theta(\ay, \xx, \bz, \ell)}  % theta values for LP program
\newcommand{\paop}{\Theta(\ay, \xx, \bz, \ell)}  % The optimal values of the LP program
\newcommand{\rraxb}{\rr_{\ay, \xx, \bz, \ell}^k} % resource set for each a, x, b

% tradeoff
\newcommand{\wsc}{\w_{\mathrm{sc}}} % set covering welfare
\newcommand{\ipoa}{\mathcal{X}} % poa parameter
\newcommand{\fxx}{\f^{\ipoa}} % poa parameter

\def\BibTeX{{\rm B\kern-.05em{\sc i\kern-.025em b}\kern-.08em
    T\kern-.1667em\lower.7ex\hbox{E}\kern-.125emX}}
\markboth{\journalname, VOL. XX, NO. XX, XXXX 2017}
{Author \MakeLowercase{\textit{et al.}}: Preparation of Papers for IEEE TRANSACTIONS and JOURNALS (February 2017)}

\begin{document}

\title{Best Response Sequences and Tradeoffs in Submodular Resource Allocation Games}
\author{Rohit Konda, \IEEEmembership{Student Member, IEEE}, Rahul Chandan \IEEEmembership{Student Member, IEEE}, David Grimsman \IEEEmembership{Member, IEEE}, Jason R. Marden \IEEEmembership{Fellow, IEEE}
\thanks{R. Konda (\texttt{rkonda@ucsb.edu}), R. Chandan, and J. R. Marden are with the Department of Electrical and Computer Engineering at the University of California Santa Barbara and D. Grimsman is with the Department of Computer Science at Brigham Young University. Preliminary results of this work were presented at the \emph{American Control Conference}, Atlanta GA, June 2022 \cite{konda2021balancing}.  This work is supported by \texttt{ONR grant \#N00014-20-1-2359}, AFOSR grants \texttt{\#FA9550-20-1-0054} and \texttt{\#FA9550-21-1-0203}, and the Army Research Lab through the \texttt{ARL DCIST CRA \#W911NF-17-2-0181}.}}

\maketitle

\begin{abstract}
Deriving competitive, distributed solutions to multi-agent problems is crucial for many developing application domains; Game theory has emerged as a useful framework to design such algorithms. However, much of the attention within this framework is on the study of equilibrium behavior, whereas transient behavior is often ignored.
Therefore, in this paper we study the transient efficiency guarantees of \emph{best response processes} in the context of \emph{submodular resource allocation games}, which find application in various engineering contexts. Specifically the main focus of this paper is on characterizing the optimal short-term system-level behavior under the best-response process. Interestingly, the resulting transient performance guarantees are relatively close to the optimal asymptotic performance guarantees.  Furthermore, we characterize the trade-offs that result when optimizing for both asymptotic and transient efficiency through various utility designs.
\end{abstract}

\begin{IEEEkeywords}
Game Theory, Optimization, Utility Design, Distributed Algorithms, Resource Allocation
\end{IEEEkeywords}

\section{Introduction}
\label{sec:int}

Multi-agent architectures have recently gained considerable attention due to their widespread applications including, but not limited to, swarm formation control \cite{chung2018survey}, e-commerce \cite{guttman1998agent}, traffic control \cite{wang2013distributed}, and smart grid pricing \cite{wang2014stackelberg}. The underlying goal of these distributed systems is to coordinate agents to desirable system states, as measured by some global objective, through local decision making processes. As such, \emph{game theoretic methods} have emerged as an important design methodology in these settings. In this approach, each agent is treated as a player in a non-cooperative game with their decision-making governed by a local-objective or \emph{utility} function.

\begin{figure}[h]
    \centering
    \includegraphics[width=250pt]{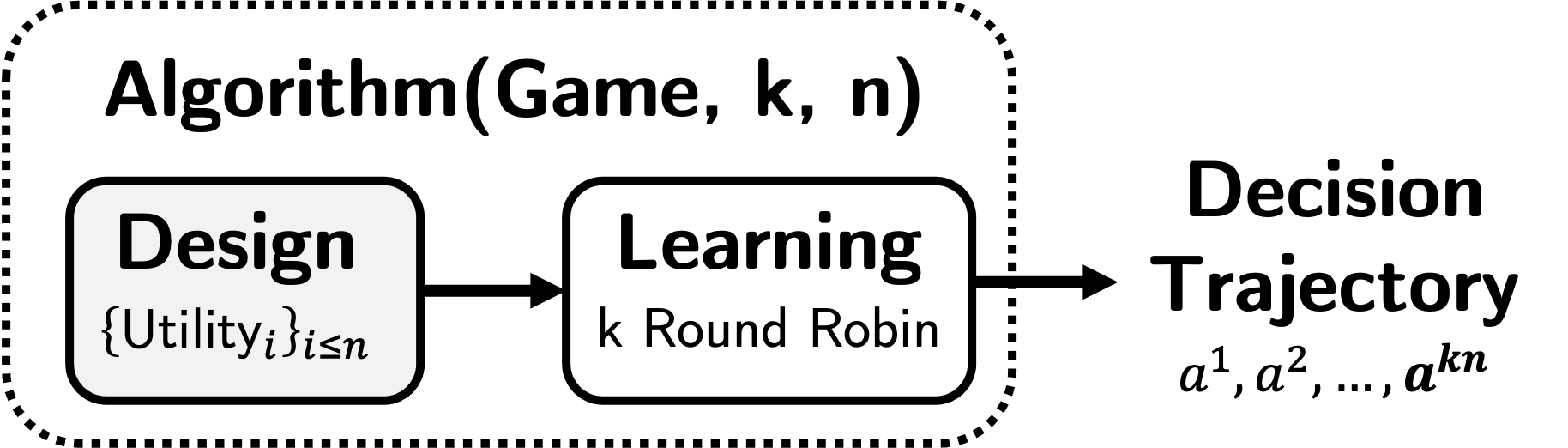}
    \caption{If a given multi-agent scenario with $n$ agents is modeled as a game, the construction of distributed algorithms can be decoupled into two domains: the design of local objectives (utilities) and the design of the learning dynamics. In this paper, we fix the dynamics to the classical $\k$ round-robin best response and study the effects of the utility design on the efficiency bounds for the resulting decision trajectory. Moreover, we characterize the guarantees as the number of rounds $\k$ increases.}
    \label{fig:intro}
\end{figure}

Endowed with a utility function, each agent can update its decisions in a self-interested manner to dynamically respond to environmental changes, including those induced by other agents. In this work, we consider when agents use the classical update rule - the \emph{round-robin best response} algorithm. Under this process, the decision updates are done in a sequential manner: at each iteration, a single, chosen agent optimizes its decision against its utility function with all of the other agents keeping their decisions fixed. As such, the utility structure of the agents can significantly influence the underlying dynamical behavior of the agents: this is highlighted in Figure \ref{fig:intro}. Understanding this relationship is important in classifying the emergent behavior, especially when a system operator would like to design or tune the utility functions in an optimal fashion.

The resulting limit points from running the best response process are exactly the set of \emph{Nash equilibrium}; in fact, the limit points of many different types of update rules correspond to Nash equilibrium \cite{young1993evolution}. Thus, within a game-theoretic framework, Nash equilibrium can operate as the fundamental model for characterizing end-behavior. This is the approach taken by much of the literature, where the limiting behavior of the round-robin best response algorithm against the global objective is characterized (and possibly optimized) through established performance metrics on the Nash equilibrium (for e.g., see \cite{roughgarden2009intrinsic,vetta2002nash, koutsoupias1999worst, paccagnan2019utility}).

While the many of the existing theoretical results on the Nash equilibrium behavior are positive, these performance guarantees only emerge asymptotically. In fact, arriving at Nash equilibrium may even take an exponential time \cite{fabrikant2004complexity}, rendering the resulting characterizations irrelevant in many realistic multi-agent scenarios. For example, there may be an extremely large number of agents in the multi-agent scenario or the relevant situational parameters may be time-varying and volatile or there may be computational and run-time restrictions on the agents. In these instances, expecting that the agents will converge to Nash equilibrium may not be a reasonable assumption.

Therefore, in this paper, we shift focus to the transient behavior of the round-robin best response algorithm. Specifically, we benchmark the iterative process to be the round-robin best response algorithm and study the performance guarantees that result from various designs of utility functions in the context of the well-studied class of \emph{submodular resource allocation games}. This class of games can model many relevant engineering applications \cite{marden2013distributed} and have convergence guarantees of the best response algorithm. The main results of this paper are as follows.

\begin{table}[h]
    \centering
    \begin{tabular}{c|c|c|c|}
        \multicolumn{1}{c}{} & \multicolumn{1}{c}{$\k = 1$} & \multicolumn{1}{c}{$\k > 1$} & \multicolumn{1}{c}{$\k = \infty$} \\
        \cline{2-4}
        $\wfw^*$ & $1 - \cc/2$ & $1 - \cc/2$ & $1 - \cc/e$ \\
        \cline{2-4}
        $\wfwc$ & $(1+\cc)^{-1}$ & $(1+\cc)^{-1}$ & $(1+\cc)^{-1}$ \\
        \cline{2-4}
        \multicolumn{1}{c}{} & \multicolumn{1}{c}{From \cite{konda2024optimal}} & \multicolumn{1}{c}{Theorem \ref{thm:kroundC}} & \multicolumn{1}{c}{Proposition \ref{prop:oneroundC}}
    \end{tabular}
    \caption{Comparison of Efficiency Results of the $\k$-round algorithm under the naive utility design $\wfwc$ and the optimal design $\wfw^*$ under different curvatures $\cc \in [0, 1]$.}
    \label{tab:eff}
\end{table}

\begin{itemize}
    \item Our first main result in Theorem \ref{thm:kroundC} provides the optimal transient performance guarantees for any finite $\k \geq 1$ rounds, which is summarized in Table \ref{tab:eff}. Interestingly, we note that these performance guarantees do not increase past the first round. This non-intuitive result suggests minimal benefit in performance when allowing agents to alter their decisions in response to other agents.
    
    \item In general, utility designs that optimize transient guarantees may not be optimal for the asymptotic behavior (and vice-versa). Our second result confirms this trade-off in submodular resource allocation games; in fact, in Theorem \ref{thm:submodtrade}, it is shown that optimizing for asymptotic performance guarantees can result in arbitrarily bad transient performance guarantees. Furthermore, we delineate the exact Pareto-optimal frontier for the asymptotic and transient performance guarantees in Theorem \ref{thm:poapobtradeoff} for a subclass of set covering games. 
\end{itemize}
In previous work \cite{konda2024optimal}, the efficiency guarantees for $\k=1$ rounds were characterized for resource allocation games. In this work, we extend these results to consider the efficiency guarantees for finite rounds $k\geq 1$ and asymptotic efficiency $k = \infty$ as well as consider the trade-offs between optimizing for one-round and asymptotic efficiency. The results of this work suggest an acute diminishing return on the achievable performance guarantees as agents make successive locally optimal decisions. While the current literature is sparse on characterizing short-term behavior of game-processes, the non-intuitive results in this work indicate the importance of analyzing the interaction between local objective designs, asymptotic performance, and transient performance as a whole.

Moreover, this work belongs to a larger research trend that aims to study game theoretic models beyond their respective equilibrium. In contrast to the traditional game-theoretic approach, the \emph{game dynamics} are embraced as a valuable feature of the game, where a rigorous study of actualized play can provide important insights about the game model (see, for e.g., \cite{goemans2005sink, candogan2011flows, papadimitriou2019game}). Furthermore, characterizing performance guarantees along these game dynamics is valuable to understanding the transient behavior of the agents. In contrast to the study of equilibrium quality, the literature on transient guarantees is much less developed. However, we highlight an important subset of works that characterize transient performance guarantees in different game-theoretic contexts: such problem domains include affine congestion games \cite{christodoulou2006convergence, bilo2009performances, fanelli2008speed, bilo2018unifying}, market sharing games \cite{christodoulou2006convergence, mirrokni2004convergence}, basic utility games \cite{mirrokni2004convergence}, series-parallel networks, and load-balancing games \cite{suri2007selfish, caragiannis2006tight}.

\section{Model}
\label{sec:model}

We consider distributed settings that are modeled as \emph{submodular resource allocation games} \cite{marden2013distributed}. Let $\p = \{1, \dots, n\}$ be the collection of agents that can possibly utilize a portion of a given finite set of resources $\rr = \{r_1, \dots, r_{d}\}$. Each resource $r \in \rr$ is associated with a \emph{welfare rule} $\w_r :\N \to \R_{>0}$ that defines the welfare accrued at each resource based on number of agents that utilize it. In this work, we assume that the welfare rule $\w_r$ is non-decreasing and concave (or that $\w_r(j+1) - \w_r(j)$ is non-negative and non-increasing in $j$) and use the denotation $\w_r(0) = 0$. In this way, additional utilization of resources by the agents has positive but decreasing marginal return for the global objective. The choice of resource utilization for each agent is given by its action set  $\ac_i \subseteq 2^\rr$. As such, the quality of a joint action $a = (a_1, \dots, a_n) \in \ac = \ac_1 \times \cdots \times \ac_n$ is classified through a system-level objective function $\W: \ac \to \R_{> 0}$ of the form
\begin{equation}
    \W(a) = \sum_{r \in \rr}{\w_r(|a|_r)},
\end{equation}
where $|a|_r = \card{\{i \in \p : r \in a_i\}}$ is the number of agents that utilize $r$ in the joint action $a$. These types of objectives are commonplace in many engineering domains, with applications in information gathering, image segmentation, statistical summarization (see \cite{krause2014submodular}). We include a sample of these applications below.

\begin{ex}[Weapon-Target Assignment]
\label{ex:WTA}
Consider a weapon-target assignment problem \cite{manne1958target}, in which a set of agents $\p$ defend against a set of $\rr$ targets. Each agent $i \in \p$ has to decide which targets $r \in \rr$ to defend against with a $p_d$ chance of defending against each target, where its decision $a_i \subset \rr$ is a subset of targets. Each target is also characterized by its relative value $v_r \geq 0$. As a whole, the set of agents would like to maximize the expected value of the targets defended. As such, the nonlinear objective of the agent's decisions is denoted by
\begin{equation}
    \W(a) = \sum_{r \in \rr} v_r \cdot \left(1 - (1 - p_d)^{|a|_r}\right).
\end{equation}

\vspace{.5em}
\noindent \textnormal{\textbf{Example 1A} (Set Covering)\textbf{.}} If the probability of defending $p_d$ is $1$, then we recover the problem of set covering \cite{gairing2009covering}. In this domain, there is no benefit for more than agent to defend against a target, and as a whole, the agents would like to maximize the value of the targets defended. Thus, the welfare function simplifies to 
\begin{equation}
    \W(a) = \sum_{r \in \bigcup_i a_i} v_r.
\end{equation}
\end{ex}

\begin{ex}[Wireless Transmission over a Network]
\label{ex:wirtrans}
Consider a group of communicating agents, as in \cite{marden2012price}, that send transmissions through a shared network with nodes $V$ and edges $E$. Each agent would like to send a wireless transmission over the network from a given start node $s_i$ to an end node $t_i$, and it must choose one out of the possible paths from $s_i$ to $t_i$ to transmit across. Thus, the resource set is $E$ and the action set $\ac_i \subset 2^E$ is the possible set of paths from $s_i$ to $t_i$. An edge $e$ may experience congestion if multiple transmissions utilize it; we assume that with each additional transmission, the rate of transmission experiences a harmonic-like decay. As a whole, the agents would like to send their transmissions with the highest rate. As such, the system welfare is
\begin{equation}
    \W(a) = \sum_{e \in \bigcup_i a_i} \sum_{j=1}^{|a|_e} \frac{1}{j}.
\end{equation}
\end{ex}

The global directive of the agents is to coordinate to a joint action that maximizes the system welfare, i.e. $\aopt \in \arg \max_{a \in \ac} \W(a)$. In order to coordinate the agents in a distributed fashion, we assume a game-theoretic setup, where each agent optimizes its decision with respect to a given local objective, or utility function  $\U : \ac \to \R$ in a self-interested process. To establish the learning procedure of the agents, we focus attention to a class of best response processes known as \emph{$\k$-round walks} (or $\k$ round-robin best response), explicitly stated in Algorithm \ref{alg:cap2}. At each step of the algorithm, an agent is selected in a round-robin fashion to perform a best response; this goes on until $\k$ rounds have been completed. For a given joint action $\alpha \in \ac$, we say the action $\abr_i$ is a \emph{best response} for agent $i$ if 
\begin{equation}
\abr_i \in \ \br(\alpha_{-i}) = \arg \max_{a_i \in \ac_i} \U(a_i, \alpha_{-i}),
\end{equation}
where $a_{-i} = (a_1, \dots, a_{i-1}, a_{i+1}, \dots a_n)$ denotes the joint action $a$ without the action of agent $i$. We also assume that the round-robin walk begins with none of resources being utilized by any of the agents, denoted by the null joint action $\emp := \varnothing$.

\begin{algorithm}
\caption{$\k$-Round Walk}
\label{alg:cap2}
\begin{algorithmic}
\Require $a(0) \gets \emp$, $i \gets 1$, $\tau \gets 1$, $\k$, $n$
\While{$\tau \leq \k n$}
\State Modify action of agent $i$ to $a_i(\tau) \gets \br(a_{-i}(\tau-1))$;
\State Fix other agent actions to $a_{-i}(\tau) \gets a_{-i}(\tau-1)$;
\State Increment $\tau$ by $1$;
\State Set the next agent $i \gets \tau \mod n$ by round-robin;
\EndWhile
\State \Return $a(\k n)$
\end{algorithmic}
\end{algorithm}

Running the $\k$-round walk algorithm induces the action trajectory
$a(0)=\emp, a(1), \dots, a(\k n - 1 ), a(\k n)$ with $a(\k n)$ being the end resulting joint action. The central goal of this work is to understand how the utility functions of the agents affect the behavior of the $\k$-round walk. To this end, the performance of the joint action $\sol \doteq a(\k n)$ after $\k$ rounds can be characterized through the following metric \footnote{We note that the resulting action $a(\k n)$ may not be unique if the best response $\br(a_{-i}(\tau))$ is not unique for some $\tau$. In this case, we overload $\W(\sol)$ to mean the minimum welfare $\min_{\alpha \in \sol} \W(\alpha)$ for the $\k$-round walk.}
\begin{equation}
\label{eq:effG}
\pob(\G; \k) = \frac{\W(\sol)}{\max_{a \in \mathcal{A}}{\W(a)}} \in [0, 1],
\end{equation}
where a ratio closer to $1$ implies the efficiency of the joint action after $\k$ rounds is closer to optimal. We use tuple $\G \triangleq (\p, \ac, \rr, \W, \{\U\}_{i \in \p})$ to define the game instance under consideration. Furthermore, we characterize the performance of the limit points \footnote{Since resource allocation games are \emph{potential games} \cite{shapley1953value}, the limit points of $\k$-round walks are necessarily a subset of the set of Nash equilbrium.} of the $\k$-round walk as 
\begin{equation}
    \pob(\G; \infty) = \frac{\W(\lim_{\k \to \infty} \sol)}{\max_{a \in \mathcal{A}}{\W(a)}}.
\end{equation}

Accordingly, the form of the utility functions has a significant impact on the given metric $\pob(\G; \k)$. One natural utility function to consider is the \emph{common-interest} (CI) utility, where all the agents share the same utility function $\U(a) \equiv \W(a)$ for all $a \in \ac$ and $i \in \p$. In line with literature on submodular optimization \cite{nemhauser1978analysis}, we demonstrate that common interest utilities exhibit constant factor efficiency guarantees $\pob(\G; \k)$. However, as unveiled by subsequent results in this work, by finely adjusting the utility functions, we can achieve superior efficiency guarantees. Following this perspective, lets consider utility functions of the form
\begin{equation}
\label{eq:utildef}
    \U(a_i, a_{-i}) = \sum_{r \in \rr} \wf_r(|a|_r),
\end{equation}
where the \emph{distribution rule} $\wf_r : \N \to \R_{>0}$ defines the resource-specific agent utility determined by $|a|_r$. In general, $\wf_r \neq \w_r$ may not be equivalent \footnote{It is possible to introduce locality to Eq. \eqref{eq:utildef}, where value is summed across agent selections $r \in a_i$ instead of $r \in \rr$. We discuss this in the Appendix.}. We also assume that $\wf_r$ is non-decreasing and concave with $\wf_r(1) = \w_r(1)$ for all $r \in \rr$. Now, the choice of the distribution rules $\wf_r$ influences the resulting joint action trajectory, and in turn the resulting end joint action $\sol$.

In most scenarios of interest, a system designer is required to specify the distribution rules $\wf_r$ without specific knowledge of the resource allocation game parameters, such as the number of agent $\p$ or the action set $\ac$. To that end, let $\ww$ be the set of possible welfare rules that could be associated with any resource, i.e., $\w_r \in \ww$ for all  $r \in \rr$.  Here, the system designer is tasked with associating a distribution rule to each type of resource, i.e.,  the distribution rule for any resource $r\in \rr$ with the welfare rule $\w_r$ is of the form $\wf_r = \wfw(\w_r)$  where we refer to the map $\wfw: \ww \to \R^{\N}_{\geq 0}$ as the \emph{utility design}. We refer to the common interest utility design as $\wfwc$, where $\wfwc(\w_r) = \w_r$ for all $\w_r \in \ww$. Lastly, we define the set of resource allocation games that are induced by $\ww$ and $\wfw$ as $\setgmw$, where a game $\G \in \setgmw$ if $\w_r \in \ww$ and $\wf_r = \wfw(\w_r)$ for all resources $r \in \rr$.

\begin{ex}[Set Covering continued] When we reconsider the set covering problem in Example \ref{ex:WTA}A, the impact of different utility designs on the resulting decision trajectories can be highlighted through a simple game example, shown in Figure \ref{fig:ex3}. Let $\G$ be a game with two agents and three resources $\rr = \{r_1, r_2, r_3\}$ with values $v_1 = 1$, $v_2 = 1 + \varepsilon$, and $v_3 = \varepsilon$ for some $\varepsilon \ll 1$. The action set for the first agent is $\ac_1 = \{\varnothing, \{r_1\}, \{r_2\}\}$ and $\ac_2 = \{\varnothing, \{r_2\}, \{r_3\}\}$ for the second.

\begin{figure}[ht]
    \centering
    \includegraphics[width=120pt]{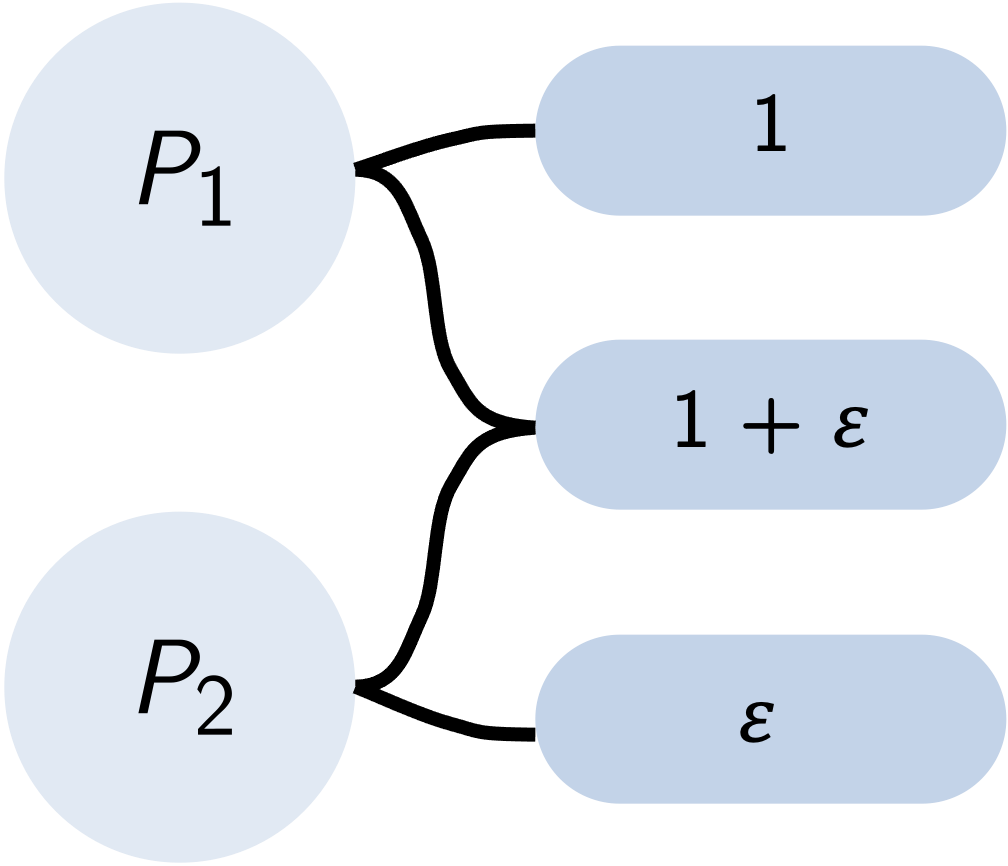}
    \caption{Game example depicting resource values and agent selections.}
    \label{fig:ex3}
\end{figure}

Under the common interest utility design $\wfwc$, the best response process will produce the resource selection trajectory $(\varnothing, \varnothing), (r_2, \varnothing), (r_2, r_3), (r_2, r_3), \dots$, resulting in a total welfare of $1 + 2 \varepsilon$ as $k \to \infty$. If we use instead the distribution rule $\wf \simeq [0, 1, 3/2, 3/2, \dots]$, the resource selection trajectory is then $(\varnothing, \varnothing), (r_2, \varnothing), (r_2, r_2), (r_1, r_2), (r_1, r_2), \dots$, resulting in a total welfare of $2 + \varepsilon$ as $k \to \infty$. As displayed, implementing the correct utility design (which may substantially differ from $\wfwc$) can result in significant improvements in the welfare values along the decision trajectory.
\end{ex}

The central focus of this paper is to understand how the choice of distribution rules, derived from $\wfw(\cdot)$, and the number of rounds $\k$ impacts the efficacy of the emergent collective behavior in the $\k$-round walk. Therefore, we additionally extend the efficiency measure to the set of games $\setgmw$ and quantify the optimal efficiency guarantees as
\begin{align}
\label{eq:effsetG}
    \pob(\ww, \wfw; \k) &= \inf_{\G \in \setgmw} \pob(\G; \k), \\
    \pob^*(\ww; \k) &= \sup_{\wfw: \ww \to \R^{\N}_{> 0}} \pob(\ww, \wfw; \k).
\end{align}

We similarly extend the definitions of $\pob(\G; \infty)$. We importantly highlight that $\lim_{\k \to \infty} \pob(\ww, \wfw; \k) \neq \pob(\ww, \wfw; \infty)$ in general, as the limit $\lim_{k \to \infty}$ and the infimum $\inf_{\G \in \setgmw}$ cannot be interchanged. We devote the next section to characterizing the introduced efficiency metrics.

\section{Main Results}
\label{sec:results}

Our work is concerned with classifying the guarantees associated with the $\k$-round walk. Before we introduce our results, we introduce a parametrization for the class of submodular welfare rules, denoted as \emph{curvature} \cite{conforti1984submodular, chandan2021tractable}. Curvature is a classical parametrization used extensively in the submodular optimization literature to describe the degree of submodularity of a given welfare function, where a curvature $\cc = 0$ indicates that a welfare rule is close to linear, and a curvature $\cc = 1$ indicates that a welfare rule is maximally submodular. We define curvature formally below.

\begin{defn}[Curvature]
A submodular welfare rule $\w$ has a curvature of $\cc \in [0, 1]$ if $\cc = 1 - \lim_{n \to \infty} (\w(n+1) - \w(n))/\w(1)$.
\end{defn}

With this, we can characterize the optimal efficiency guarantees for the $\k$-round walk (when $\k=1$ and when $\k=\infty$) as well as the efficiency guarantees for the common-interest utility below.

\begin{prop}
\label{prop:oneroundC}
Let the set $\ww$ comprise of all submodular welfare rules $\w$ that have curvature of at most $\cc \in [0, 1]$. The optimal efficiency guarantees are
\begin{align}
    \label{eq:OPTone}
    \pob^*(\ww; 1) &= 1 - \cc/2, \\
    \label{eq:OPTinfty}
    \pob^*(\ww; \infty) &= 1 - \cc/e.
\end{align}
and the guarantees associated with the common interest are
\begin{equation}
    \label{eq:CIoneinfty}
    \pob(\ww, \wfwc; 1) = \pob(\ww, \wfwc; \infty) = \left( 1 + \cc \right)^{-1}.
\end{equation}
\end{prop}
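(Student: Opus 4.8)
The plan is to handle the four stated quantities in two groups. The one-round optimal guarantee $\pob^*(\ww; 1) = 1 - \cc/2$ is inherited directly from the characterization in \cite{konda2024optimal}, so here I would only cite it. The remaining work is to establish the asymptotic optimal guarantee $\pob^*(\ww; \infty) = 1 - \cc/e$ and the two common-interest guarantees. For every $\k = \infty$ quantity, the first step is to replace the limit point $\lim_{\k \to \infty} \sol$ by an arbitrary Nash equilibrium: since resource allocation games are potential games, the limit points of the $\k$-round walk are Nash equilibria, so $\pob(\G; \infty)$ is controlled by the game's pure price of anarchy, and the worst-case analysis can be carried out over all Nash equilibria $\ne$ of games in $\setgmw$.

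For $\pob^*(\ww; \infty)$ I would use the primal--dual / smoothness machinery that is standard for utility-design problems. Fixing a design $\wfw$, a Nash equilibrium $\ne$, and an optimal action $\aopt$, the equilibrium condition $\U(\ne) \ge \U(\aopt_i, \ne_{-i})$, summed over $i \in \p$ and reorganized resource-by-resource, yields a single inequality relating $\W(\ne)$ and $\W(\aopt)$ whose coefficients depend only on the values $\w_r(j)$ and $\wf_r(j)$ at integer loads $j$. Minimizing $\W(\ne)/\W(\aopt)$ subject to this inequality reduces the computation of $\pob(\ww, \wfw; \infty)$ to a linear program in the load-profile variables, and optimizing over $\wfw$ turns $\pob^*(\ww; \infty)$ into a min--max program that, by LP duality, collapses to a single linear program over the curvature-$\cc$ welfare class. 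I would then exhibit its closed-form value: the achievability direction is a smoothness certificate for a specific distribution rule $\wf$ whose marginal increments decay geometrically, and whose continuum limit produces the $1/e$ factor exactly as in the Nemhauser--Wolsey--Fisher bound; the tightness direction is a family of set-cover-like instances (scaled by the curvature) on which every Nash equilibrium attains efficiency approaching $1 - \cc/e$.

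For the common-interest guarantees I would specialize $\wf_r = \w_r$. In the asymptotic case, a Nash equilibrium of the common-interest game is a local optimum of $\W$, and the same resource-by-resource smoothness inequality --- now with $\wf = \w$ and exploiting only concavity together with the curvature bound $\w_r(j+1) - \w_r(j) \ge (1-\cc)\w_r(1)$ --- gives $\W(\ne) \ge (1+\cc)^{-1} \W(\aopt)$, and a two-agent set-cover-type instance shows tightness. In the one-round case, the $\k = 1$ walk under $\wf = \w$ is exactly the sequential greedy procedure on the partition structure of the action sets, since each agent in turn maximizes the realized welfare. I would bound it by a telescoping argument: summing the marginal gains $\W(a(\tau)) - \W(a(\tau-1))$ collected by the greedy steps and comparing them, via submodularity and the curvature bound, against the marginal value of the optimal selections yields the same $(1+\cc)^{-1}$ guarantee, with the matching instance being a two-resource example analogous to the set-covering game discussed earlier.

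The main obstacle I anticipate is pinning down the exact constant $1 - \cc/e$ in both directions simultaneously. The achievability side requires guessing the optimal distribution rule (an exponentially weighted marginal-contribution rule) and verifying its smoothness inequality across \emph{all} admissible loads, while the tightness side requires a worst-case instance whose value matches only after the continuum limit that produces $e$; reconciling these is where the delicate analysis sits. A secondary subtlety, flagged in the excerpt, is that one must reason about genuine Nash equilibria throughout: because $\lim_{\k} \pob(\ww, \wfw; \k) \ne \pob(\ww, \wfw; \infty)$ in general --- the limit in $\k$ and the infimum over $\setgmw$ cannot be exchanged --- the asymptotic bounds cannot be obtained by merely letting $\k \to \infty$ in a finite-round estimate.
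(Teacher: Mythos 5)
Your overall architecture matches the paper's: the $\k=1$ statements are imported from \cite{konda2024optimal}, and the $\k=\infty$ statements are obtained by reducing the $\infty$-round-walk efficiency to the price of anarchy and then invoking the known LP/smoothness characterizations of $\poa$ for curvature-$\cc$ welfare classes (\cite{chandan2021tractable} for the optimal design, \cite{paccagnan2021utility} for the common-interest design). The paper does not re-derive the $1-\cc/e$ LP from scratch as you propose, but that is a matter of citation versus re-proof, not of substance.

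The genuine gap is in your reduction step. The containment $\lim_{\k\to\infty}\sol \subseteq \NE$ only yields $\pob(\G;\infty) \geq \poa(\G)$: the walk's limit points are a (possibly strict) subset of the Nash equilibria, so the worst reachable limit point can only be \emph{better} than the worst equilibrium. This one-sided inequality suffices for the achievability (lower-bound) direction of Eqs.~\eqref{eq:OPTinfty} and \eqref{eq:CIoneinfty}, but not for the matching upper bounds, which require exhibiting games in which an equilibrium of efficiency arbitrarily close to $\poa(\ww,\wfw)$ is actually \emph{reached} by the round-robin best-response process started from $\emp$. This is precisely the content of the paper's Lemma~\ref{lem:poblesspoa}: starting from the optimal solution of the price-of-anarchy linear program, it builds an explicit game in which the one-round walk terminates at the PoA-worst equilibrium (using that the utility rules are non-increasing), thereby proving $\pob(\ww,\wfw;\infty)=\poa(\ww,\wfw)$. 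Your proposal asserts instead that ``the worst-case analysis can be carried out over all Nash equilibria,'' and later that the tight instances have the property that \emph{every} Nash equilibrium is near-worst-case; neither claim is justified, and the second is not what the standard PoA-tightness constructions give you (the optimal allocation typically coexists in those games and may itself be an equilibrium, or at least a reachable limit point). Relatedly, the subtlety you flag at the end --- that $\lim_{\k}\pob(\ww,\wfw;\k)\neq\pob(\ww,\wfw;\infty)$ --- is real but is not the obstacle here; the obstacle is the possible strictness of the inclusion of reachable limit points in $\NE$. Finally, a minor point: for $\pob(\ww,\wfwc;\infty)\geq(1+\cc)^{-1}$ the paper uses the much shorter observation that every common-interest best response weakly increases $\W$, so the asymptotic efficiency is at least the one-round efficiency; your local-optimality smoothness argument is workable but heavier than needed.
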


\begin{rem*}
We remark that Eq. \eqref{eq:OPTone} and the $\k=1$ characterization in Eq. \eqref{eq:CIoneinfty} come directly from prior work in  \cite[Theorem 3]{konda2024optimal}. The results in Eq. \eqref{eq:OPTinfty} and the $\k=\infty$ characterization in Eq. \eqref{eq:CIoneinfty} come from non-trivially extending the price of anarchy results found in \cite[Theorem 1]{chandan2021tractable}. Details on this extension is found in the Appendix.
\end{rem*}

We remark that the optimal efficiency guarantees after one round are relatively close to the optimal asymptotic guarantees\footnote{Furthermore, when utilizing the common interest utility design, the efficiency guarantees for $\k=1$ and $\k=\infty$ are identical.}. However, while small, there is still a gap between the optimal efficiency guarantees for $\k=1$ and $\k=\infty$. We expect this efficiency gap to decay as we run the $\k$-round walk for more rounds. However, we show surprisingly that further rounds do not increase the relative efficiency guarantees. Specifically, running the $\k$-round walk \emph{can not} improve the resulting efficiency guarantee for any given $\k$ over the guarantee of the one-round walk. This is made formal in the upper bound characterization stated in the next theorem.

\begin{thm}
\label{thm:kroundC}
Let the set $\ww$ comprise of all submodular welfare rules $\w$ that have curvature of at most $\cc \in [0, 1]$.  Then the efficiency guarantees of the $\k$-round walk, for any $\k \geq 1$, is upper bounded by the expressions
\begin{align}
    \label{eq:klessC}
    \pob^*(\ww; \k) &\leq \pob^*(\ww; 1) = 1 - \cc/2, \\
    \label{eq:klessCCI}
    \pob(\ww, \wfwc; \k) &= \pob(\ww, \wfwc; 1) = (1 + \cc)^{-1},
\end{align}
respectively for the the optimal utility design and the common interest utility.
\end{thm}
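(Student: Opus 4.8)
The plan is to handle the common-interest equality \eqref{eq:klessCCI} first, since it can be pinned down exactly, and then reduce the optimal-design bound \eqref{eq:klessC} to a design-wise monotonicity statement. For the lower bound in \eqref{eq:klessCCI}, note that under $\wfwc$ each agent's utility coincides with the welfare, $\U \equiv \W$, so every best-response step weakly increases $\W$; hence $\W(a(n)) \le \W(a(\k n))$ along any walk, and restricting a $\k$-round trajectory to its first round gives $\pob(\G;1)\le\pob(\G;\k)$ for every $\G$. Taking the infimum over $\setgmw$ and invoking Proposition \ref{prop:oneroundC} yields $\pob(\ww,\wfwc;\k)\ge\pob(\ww,\wfwc;1)=(1+\cc)^{-1}$. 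For the matching upper bound I would use a \emph{freezing} construction: take a near-worst instance for the equilibrium efficiency of $\wfwc$ (which is also $(1+\cc)^{-1}$ by the Proposition) in which each agent $i$ has a binary action set $\{\varnothing,a_i\}$ and the profile $a^\star=(a_1,\dots,a_n)$ is a strict Nash equilibrium. The enabling observation is that concavity of $\wf_r$ makes an agent's marginal incentive to adopt $a_i$ nonincreasing in the congestion it faces; thus if $a_i$ is a best response against the fully populated $a^\star_{-i}$, it is a best response against every lighter sub-profile encountered during round one. Starting from $\emp$, the walk therefore lands on $a^\star$ after one round and, $a^\star$ being an equilibrium, remains there, so $\pob(\G;\k)=\pob(\G;1)$ and $\pob(\ww,\wfwc;\k)\le(1+\cc)^{-1}$.

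For the optimal-design bound \eqref{eq:klessC} it suffices to establish the sharper design-wise inequality $\inf_{\G\in\setgmw}\pob(\G;\k)\le\inf_{\G\in\setgmw}\pob(\G;1)$ for every fixed $\wfw$, since taking the supremum over $\wfw$ then gives $\pob^*(\ww;\k)\le\pob^*(\ww;1)=1-\cc/2$. Concretely, given a near-worst one-round instance $\G_1$ realizing $\inf_{\G}\pob(\G;1)$, I would construct a companion instance $\G_\k$ whose $\k$-round outcome is no more efficient than the one-round outcome of $\G_1$. The device is a \emph{stalled cascade}: arrange a long chain of agents whose best-response dependencies run opposite to the round-robin order, so that each agent reacts to its predecessor only in the following round. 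One round then advances the cascade by a single link, and by making the chain far longer than $\k$ the profile after $\k$ rounds is still in its early, inefficient phase, reproducing the bad configuration of $\G_1$ rather than the better equilibrium the chain would eventually settle into.

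The crux, and the step I expect to be hardest, is making this witnessing family work for designs whose equilibria are good. The freezing argument shows that any frozen outcome is an equilibrium, hence has efficiency at least the price-of-anarchy value of $\wfw$, which can strictly exceed $1-\cc/2$ (for the asymptotically optimal design it equals $1-\cc/e$). Consequently the games certifying \eqref{eq:klessC} cannot be frozen; they must remain genuinely in transit after round $\k$. Verifying the cascade against an arbitrary adversarial $\wfw$ — that each agent truly waits for its predecessor, that the partially propagated profile has welfare ratio at most $1-\cc/2+\varepsilon$, and that the optimum stays large — is where the real effort concentrates, and it is precisely this growth of the instance with $\k$ that underlies the Remark's warning that $\lim_{\k\to\infty}\pob(\ww,\wfw;\k)\neq\pob(\ww,\wfw;\infty)$.
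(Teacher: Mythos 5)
Your lower-bound argument for the common-interest case (welfare is nondecreasing along any CI best-response trajectory, hence $\pob(\G;\k)\ge\pob(\G;1)$) matches the paper. The two upper bounds, however, each have a genuine gap. For the CI upper bound, your ``enabling observation'' --- that concavity of $\wf_r$ implies that an action which is a best response against the fully populated $a^\star_{-i}$ remains a best response against every lighter sub-profile met during round one --- is false as stated. Concavity raises the marginal value of \emph{every} resource when congestion drops, and it raises it most on resources whose equilibrium congestion comes from agents \emph{later} in the round-robin order (those resources are still empty when agent $i$ moves). So an alternative action that is dominated at equilibrium only because it collides with later agents can strictly beat $a^\star_i$ in round one. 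What actually makes a freezing construction work is a structural condition on \emph{who} shares which resource relative to the walk order: in the paper's $n$-agent construction the resources competing with $\abr_j$ are exactly those already claimed by agent $j$'s predecessor, so agent $1$ strictly prefers $\abr_1$ and each subsequent agent is (at best) indifferent, with the adversarial tie-break then selecting $\abr$. Your argument never establishes this orientation, and without it the walk need not land on the bad equilibrium.

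For the optimal-design bound \eqref{eq:klessC}, you correctly identify the crux --- for a design such as $\wfw^*_\infty$ with $\poa(\ww,\wfw)=1-\cc/e>1-\cc/2$, every Nash equilibrium of every induced game has efficiency above $1-\cc/2$, so the witness outcome cannot be a frozen equilibrium --- but you leave that crux entirely unexecuted, and your proposed ``stalled cascade'' growing with $\k$ is not how the obstruction is overcome. The paper uses a \emph{fixed} two-agent game on the bent rule $\w^{1,\cc}$ with three resource blocks sized by $\f(2)$, split into three cases on the value of $\f(2)=\fw(\w^{1,\cc})(2)$. In two of the cases the bad outcome $\abr$ is itself a Nash equilibrium reached in round one (so freezing does apply there, contrary to your claim that the witnesses ``cannot be frozen''); in the remaining case --- which is the one covering $\wfw^*_\infty$ --- the walk sits at a \emph{different}, highly efficient Nash action $a'$ for $\k-1$ rounds and then, exploiting a tie and the min-over-trajectories convention in the definition of $\pob(\G;\k)$, hops to the inefficient non-equilibrium state $\abr$ in the final round. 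No chain of length exceeding $\k$ is needed, and no incentive verification against an arbitrary adversarial $\f$ beyond the single quantity $\f(2)$ is required. As written, your proposal defers exactly the part of the proof that contains the theorem's content.
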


\begin{rem*}
We remark that the results in Theorem \ref{thm:kroundC} are not endemic to the $\k$-round walk algorithm. Allowing for variable turn order or only allowing strict best responses does not affect the resulting upper bounds. This is discussed in more detail in the Appendix in Section \ref{subsec:thmkapp}.
\end{rem*}

Notably, for any curvature $\cc \in [0, 1]$, the upper bound in Eq. \eqref{eq:klessC} exactly matches the characterization in Eq. \eqref{eq:OPTone} and likewise for the upper bound in Eq. \eqref{eq:klessCCI} and the characterization in Eq. \eqref{eq:CIoneinfty}. Therefore, in regards to the efficiency guarantees, running the $\k$-round walk algorithm for more than one round does not lead to gains in performance. We verify this result in the illustrative example described in Section \ref{sec:illex}.

In Proposition \ref{prop:oneroundC}, we describe the optimal performance guarantees for $\k=1$ and $\k=\infty$. Accompanying these results are characterizations of the utility designs that achieve said guarantees. As these utility designs are not equivalent, this prompts the natural question: Does optimizing the efficiency guarantees for $\k=1$ have down-stream effects on the efficiency for $\k=\infty$, and vice versa? This question is precisely  addressed in the next theorem. Explicitly, we identify the reciprocal guarantees for both the utility design that optimizes the transient guarantees for $\k=1$ and the utility design that optimizes the asymptotic guarantees for $\k=\infty$.

\begin{thm}
\label{thm:submodtrade}
Let the set $\ww$ comprise of all submodular welfare rules $\w$ that have curvature of at most $\cc \in [0, 1]$. Consider $\wfw^*_{\infty}$ to be the utility design that achieves the optimal $\pob^*(\ww; \infty)$ and $\wfw^*_1$ to be the utility design that achieves the optimal $\pob^*(\ww; 1)$. Then we have that
\begin{align}
    \label{eq:tradfrontone}
    \pob(\ww, \wfw_1^*; \infty) &= \pob^*(\ww; 1) = 1 - \cc/2\\
    \label{eq:tradfrontpoa}
    \pob(\ww, \wfw^*_{\infty}; 1) &\leq 1+\frac{\left(\cc-3\right)\cc}{\left(2-\cc\right)e+\cc}
\end{align}
with $\pob(\ww, \wfw^*_{\infty}; 1) < \pob^*(\ww; 1)$ holding strictly for $\cc > 0$ and $\pob(\ww, \wfw^*_{\infty}; 1) = 0$ for curvature $\cc=1$.
\end{thm}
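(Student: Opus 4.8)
The plan is to treat the three sub-claims separately, in each case leveraging the explicit forms of the two optimal designs $\wfw^*_1$ and $\wfw^*_{\infty}$ supplied by Proposition \ref{prop:oneroundC} and the works \cite{konda2024optimal, chandan2021tractable} it builds on. For the identity \eqref{eq:tradfrontone} I would argue both inequalities. For the lower bound $\pob(\ww, \wfw^*_1; \infty) \geq 1 - \cc/2$, recall that the limit of the $\k$-round walk is a Nash equilibrium of a potential game, so it suffices to show that \emph{every} Nash equilibrium of a game in $\setgmw$ with design $\wfw^*_1$ has welfare at least $(1-\cc/2)$ times optimal; I would establish this through a smoothness / primal--dual certificate tailored to $\wfw^*_1$, of exactly the type used to certify the one-round bound $\pob^*(\ww;1) = 1-\cc/2$, checking that the relevant deviation inequalities continue to hold verbatim at an equilibrium rather than only after the first round. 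For the matching upper bound, I would exhibit a single game (or limiting family) on which $\wfw^*_1$ attains the ratio $1-\cc/2$ already after one round \emph{and} is stationary thereafter: if the one-round-tight instance behind Proposition \ref{prop:oneroundC} / Theorem \ref{thm:kroundC} is a Nash equilibrium after the first round, then $\pob(\G;\infty) = \pob(\G;1) = 1-\cc/2$ for that instance, so the infimum over $\setgmw$ is at most $1 - \cc/2$.

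For the bound \eqref{eq:tradfrontpoa} I would proceed by an explicit worst-case construction for the one-round walk under the asymptotic-optimal design $\wfw^*_{\infty}$. Using its closed form, whose increments carry the factorial/exponential structure responsible for the $e$ in $1-\cc/e$, I would build a small parameterized game, start the walk from $\emp$, track the best-response sequence $a(1), \dots, a(n)$ explicitly, and evaluate $\W(\sol)/\max_{a}\W(a)$; tuning the resource values and welfare rules so that this ratio equals the stated $1 + (\cc-3)\cc/((2-\cc)e+\cc)$ yields the upper bound. The accompanying strict inequality $\pob(\ww,\wfw^*_{\infty};1) < 1-\cc/2$ for $\cc > 0$ then reduces to the elementary comparison of $1 + (\cc-3)\cc/((2-\cc)e+\cc)$ with $1-\cc/2$ over $\cc \in (0,1]$ (as $\cc \to 0^+$ the former behaves like $1 - \tfrac{3}{2e}\cc < 1 - \tfrac12\cc$), together with the fact that both sides equal $1$ at $\cc = 0$.

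The degenerate case $\pob(\ww,\wfw^*_{\infty};1) = 0$ at $\cc = 1$ is, in my view, the main obstacle and needs a dedicated limiting family rather than the single instance above. Because the constraint $\wf_r(1) = \w_r(1)$ forces the first agent on any resource to collect its full first-unit welfare, a naive ``everyone piles onto one resource'' family fails: for pure covering rules the increments of $\wfw^*_{\infty}(\w_r)$ are summable, so only boundedly many agents pile and the ratio stays bounded away from $0$. I would therefore exploit the full richness of the curvature-$1$ class --- using welfare rules with slowly (e.g.\ harmonically) decaying marginals rather than pure covering --- to design a family of games indexed by a size parameter $N$ in which the one-round best responses under $\wfw^*_{\infty}$ repeatedly concentrate agents onto a vanishing set of resources while the optimum spreads them to cover welfare growing unboundedly in $N$, forcing $\W(\sol)/\max_{a}\W(a) \to 0$.

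Verifying that each best response indeed selects the ``concentrating'' action under the specific increments of $\wfw^*_{\infty}$, and that the optimal welfare grows strictly faster in $N$, is the delicate computational core of the argument; the remaining pieces are either smoothness bookkeeping (for the lower bound of \eqref{eq:tradfrontone}) or routine algebra (for the strict-inequality comparison). I expect the equilibrium-certificate for $\wfw^*_1$ and the $\cc=1$ limiting family to be where essentially all of the work lies.
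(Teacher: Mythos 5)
Your plan for the $\cc=1$ collapse is essentially the paper's: it also rejects pure covering rules and instead uses the curvature-$1$ bent rule $\wbc(j)=\min\{j,b\}$, for which the asymptotically optimal utility rule $\fa$ decays like $1/j$, so the spread-out optimum accrues welfare $\sim v\sum_{i\le n}\fa(i)\to\infty$ while the one-round walk (with ties broken adversarially) stacks all agents on one block of welfare $vb$. One correction of emphasis: the harmonic decay lives in the \emph{utility rule} $\fa$ (obtained by solving its defining recursion as an LTV system), not in the marginals of the welfare rule, which for $\wbc$ are simply $1$ then $0$. The concrete gap in your proposal is the upper-bound witness for Eq.~\eqref{eq:tradfrontone}. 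The one-round-optimal rule has $\f_1(2)=(2-2\cc)/(2-\cc)\in[1-\cc,1]$, which places it squarely in case (b) of the Theorem~\ref{thm:kroundC} construction: there the tight one-round outcome $\abr$ has both agents stacked on $\rr_1$, and since $\f_1(2)<1$ for $\cc>0$ agent $1$ strictly prefers to deviate to the untouched block $\rr_2$, so $\abr$ is \emph{not} a Nash equilibrium and cannot certify $\pob(\ww,\wfw_1^*;\infty)\le 1-\cc/2$. The paper instead bounds $\poa(\wbc,\f_1)$ directly from the closed-form price-of-anarchy expression of \cite{paccagnan2021utility} evaluated at $j=l=1$, with no equilibrium exhibited.

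For the matching lower bound, your claim that the one-round deviation inequalities ``continue to hold verbatim'' at equilibrium is not justified: the one-round certificate compares each agent's best response against its optimal action given only its \emph{predecessors'} choices, while the Nash certificate involves unilateral deviations from the full profile; these are different constraint families and a genuinely new verification would be needed. The paper sidesteps this entirely via Lemma~\ref{lem:poblesspoa} and its remark: since $\wfw_1^*$ is non-increasing, $\pob(\ww,\wfw_1^*;\infty)=\poa(\ww,\wfw_1^*)\ge\pob(\ww,\wfw_1^*;1)=1-\cc/2$, because the worst-case equilibrium in the Lemma~\ref{lem:poblesspoa} construction is itself reachable by a one-round walk. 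Finally, for Eq.~\eqref{eq:tradfrontpoa} your ``build and tune a game'' plan is plausible but leaves the origin of the constant unexplained; the paper evaluates the one-round LP of \cite{konda2024optimal} on the recursion defining $\wfw^*_\infty(\wbc)$ at $y=2$, $z=1$, which is exactly where $(2-\cc)e+\cc$ comes from. Your algebra for the strict inequality $1+(\cc-3)\cc/((2-\cc)e+\cc)<1-\cc/2$ on $\cc\in(0,1]$ is fine.
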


We first observe that there are no gains in the asymptotic guarantees of the utility design $\wfw_1^*$ over the respective optimal guarantees for $\k=1$. Again, we see that there are diminishing returns of running the $\k$-round algorithm for more than one round. Additionally, the transient guarantees of $\wfw^*_{\infty}$ are strictly less than $\pob^*(\ww; 1)$ as expected. However, if the curvature $\cc=1$ is maximal, we note that transient guarantees of $\wfw^*_{\infty}$ unexpectedly degrade to $0$. Interestingly, optimizing for asymptotic performance does not necessarily translate to good transient performance in our setting. Moreover, it may even result in highly undesirable behavior in the transient in certain settings.

To clarify this stark trade-off between the transient and asymptotic guarantees, we restrict attention to the class of \emph{set covering games} \cite{gairing2009covering} (see Example \ref{ex:WTA}.A) and characterize the exact Pareto optimal frontier. Set covering games are natural generalizations of covering problems  \cite{gairing2009covering}, and are characterized by the following welfare rule (with curvature $\cc=1$). 
\begin{equation}
\label{eq:wscdef}
    \wsc(j) = \left\{\begin{array}{lr}
        1, & \text{for } j \geq 1\\
        0, & \text{for } j = 0\\
        \end{array}\right\}.
\end{equation}

\noindent With this, we arrive at the following Pareto frontier characterization, depicted in Figure \ref{fig:tradepareto}. Note that the end points of the trade-off curve matches the ones dictated in Theorem \ref{thm:submodtrade} for curvature $\cc=1$ exactly.

\begin{thm}
\label{thm:poapobtradeoff}
Let $\ww = \{\wsc\}$, where $\wsc$, defined in Eq. \eqref{eq:wscdef}, is the set covering welfare rule and $\wfw(\wsc) = \wf$ is the corresponding distribution rule. If the efficiency of the limit point of the $\k$-round walk is $\pob(\wsc, \wf; \infty) = Q \in [\frac{1}{2}, 1 - \frac{1}{e}]$ \footnote{We will sometimes use abuse of notation $\pob(\w_r, \wf; \k)$ to mean $\pob(\ww = \{\w_r\}, \wfw; \k)$ with $\wfw(\w_r) = \wf$ for a specific $\w_r$ and $\wf_r$.}, the maximum efficiency $\max_{\wf} \pob(\wsc, \wf; 1)$ achievable after $\k=1$ rounds is
\begin{equation}
\label{eq:tradeoffsetcov}
\left[ \sum^\infty_{j=0} \max \left\{ j! (1 - \frac{1 - Q}{Q} \sum^j_{\tau=1}{\frac{1}{\tau!}}) , 0\right\}+1 \right]^{-1}.
\end{equation}
\end{thm}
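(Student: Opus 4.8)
The plan is to reduce both efficiency metrics to an optimization over the \emph{increments} of the distribution rule and then solve that optimization in closed form. Write $\delta_j = \wf(j) - \wf(j-1)$; by concavity, monotonicity, and the normalization $\wf(1) = \wsc(1) = 1$, a valid distribution rule for $\wsc$ is exactly a sequence with $\delta_1 = 1$ and $\delta_1 \geq \delta_2 \geq \cdots \geq 0$. Both $\pob(\wsc, \wf; \infty)$ and $\pob(\wsc, \wf; 1)$ are determined by this sequence, and the theorem reduces to maximizing the one-round metric while holding the asymptotic metric fixed at $Q$.

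First I would characterize the asymptotic efficiency. Using the primal--dual/smoothness machinery for resource allocation games specialized to set covering, I would show $\pob(\wsc, \wf; \infty) = \left[1 + \max_{j \geq 1}(j\delta_j - \delta_{j+1})\right]^{-1}$. The key step is that at any Nash equilibrium $a$, writing each agent's marginal utility as $\hat U_i(a) = \sum_{r \in a_i} \delta_{|a|_r}$ and summing the deviation inequalities $\hat U_i(a) \geq \hat U_i(\aopt_i, a_{-i})$ yields $\sum_r |a|_r\, \delta_{|a|_r} \geq \sum_r |\aopt|_r\, \delta_{|a|_r + 1}$ after using that $\delta$ is non-increasing; categorizing resources by the pair $(|a|_r, |\aopt|_r)$ turns the worst-case ratio into a finite linear program whose dual (with multiplier $1$) gives the stated expression. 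As a check, minimizing $\mu := \max_{j\geq 1}(j\delta_j - \delta_{j+1})$ forces the saturating recursion $\delta_{j+1} = j\delta_j - \mu$, solving to $\delta_{j+1} = j!\,(1 - \mu\sum_{\tau=1}^{j} 1/\tau!)$, which is a non-increasing valid rule precisely when $\mu \geq 1/(e-1)$; this recovers $\pob^*(\ww;\infty) = 1 - 1/e$ from Proposition~\ref{prop:oneroundC}. Setting $\lambda := (1-Q)/Q$, fixing $\pob(\wsc, \wf; \infty) = Q$ is therefore exactly the constraint family $j\delta_j - \delta_{j+1} \leq \lambda$ for all $j \geq 1$ (with equality attained somewhere).

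Second I would characterize the one-round efficiency as $\pob(\wsc, \wf; 1) = \left[1 + \sum_{k \geq 1} \delta_k\right]^{-1}$. The lower bound is a direct accounting along the walk. When agent $i$ acts it sees each resource $s$ with multiplicity $c_{<i}(s)$, so its realized utility $U_i = \sum_{s \in \sol_i} \delta_{c_{<i}(s)+1}$ satisfies $U_i \geq \sum_{s \in \aopt_i} \delta_{c_{<i}(s)+1}$. Telescoping over the walk gives $\sum_i U_i = \sum_r \wf(|\sol|_r) \leq \big(\sum_{k\geq 1}\delta_k\big)\,\W(\sol)$, while any optimal resource $r$ left uncovered by the walk ($|\sol|_r = 0$, hence $c_{<i}(r) = 0$) contributes a full $\delta_1 = 1$ to the $U_i$ of some agent choosing it optimally, so the count of such resources is at most $\sum_i U_i$. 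Decomposing $\W(\aopt)$ into resources covered and not covered by $\sol$ then yields $\W(\aopt) \leq \big(1 + \sum_{k\geq1}\delta_k\big)\W(\sol)$. The matching upper bound needs an explicit worst-case family in which bundled action sets force successive agents to pile onto already-covered resources (so each covered resource accrues utility tending to $\sum_{k\geq1}\delta_k$) while the optimal assignment spreads the same agents across disjoint fresh resources; constructing and verifying this tight instance is the step I expect to be the main obstacle.

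Finally I would combine the two characterizations. Maximizing $\pob(\wsc, \wf; 1)$ subject to $\pob(\wsc, \wf; \infty) = Q$ is equivalent to minimizing $\sum_{k \geq 1} \delta_k$ subject to $\delta_1 = 1$ and $\delta_{j+1} \geq \max\{j\delta_j - \lambda, 0\}$ (the asymptotic constraint together with non-negativity). Since each lower bound is monotone in the preceding term, the minimizer is the saturating rule $\delta_{j+1} = \max\{j\delta_j - \lambda, 0\}$, which solves to $\delta_{j+1} = \max\{j!\,(1 - \lambda\sum_{\tau=1}^{j} 1/\tau!),\, 0\}$; for $Q \in [\tfrac12, 1-\tfrac1e]$ one has $\lambda \in [1/(e-1), 1]$, guaranteeing this sequence is non-increasing and hence feasible. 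Summing over $j \geq 0$ and adding the trailing $1$ reproduces exactly Eq.~\eqref{eq:tradeoffsetcov}. I would close by checking the endpoints against Theorem~\ref{thm:submodtrade}: at $Q = \tfrac12$ we get $\lambda = 1$, only the $j=0$ term survives, and the value is $\tfrac12$; at $Q = 1 - \tfrac1e$ we get $\lambda = 1/(e-1)$, the terms behave like $1/(j+1)$, the sum diverges, and the value is $0$.
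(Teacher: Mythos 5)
Your route is essentially the paper's: pass to the increments $\delta_j = \wf(j)-\wf(j-1)$ (the paper's marginal-contribution utility rules $\f$), express the asymptotic metric through the price-of-anarchy quantity $1+\max_j\{j\f(j)-\f(j+1)\}$ and the one-round metric through the sum of the increments, argue that the saturating recursion $\f(j+1)=\max\{j\f(j)-\lambda,0\}$ with $\lambda=(1-Q)/Q$ is the pointwise-minimal feasible rule, and solve it via the state-transition formula to obtain Eq.~\eqref{eq:tradeoffsetcov}. The endpoint checks and the closed form all match.

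There is, however, one genuine gap. You identify $\pob(\wsc,\wf;\infty)$ with a worst case taken over \emph{all} Nash equilibria. Since the limit points of the $\k$-round walk are in general a strict subset of the Nash equilibria, a smoothness/deviation argument only yields $\pob(\wsc,\wf;\infty)\geq\poa(\wsc,\wf)$; to use the price-of-anarchy expression as an \emph{equality} constraint you must show the worst-case equilibrium is actually reachable from the empty allocation by the walk. The paper does this in Lemma~\ref{lem:poblesspoa} via a nontrivial game construction in which a price-of-anarchy-tight equilibrium is the outcome of a one-round walk (valid precisely because the utility rules are non-increasing). Without this step your constraint set ``$\pob(\wsc,\wf;\infty)=Q$'' could be strictly larger than the set of rules with $\poa=Q$, and the claimed maximum would only be a lower bound on the Pareto frontier. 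Two smaller points: your one-round formula $\bigl[1+\sum_k\delta_k\bigr]^{-1}$ omits the $-\min_i\f(i)$ correction appearing in the paper's Lemma~\ref{lem:setpob} (harmless here, since the optimal rules satisfy $\f(j)\to 0$ for $Q\leq 1-1/e$, but needed to state the characterization correctly for general rules); and the tight one-round instance you flag as your main obstacle is exactly what the paper imports from the linear-programming characterization of the one-round walk in prior work, so that step does need to be supplied rather than assumed.
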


\begin{figure}[ht]
    \centering
    \includegraphics[width=250pt]{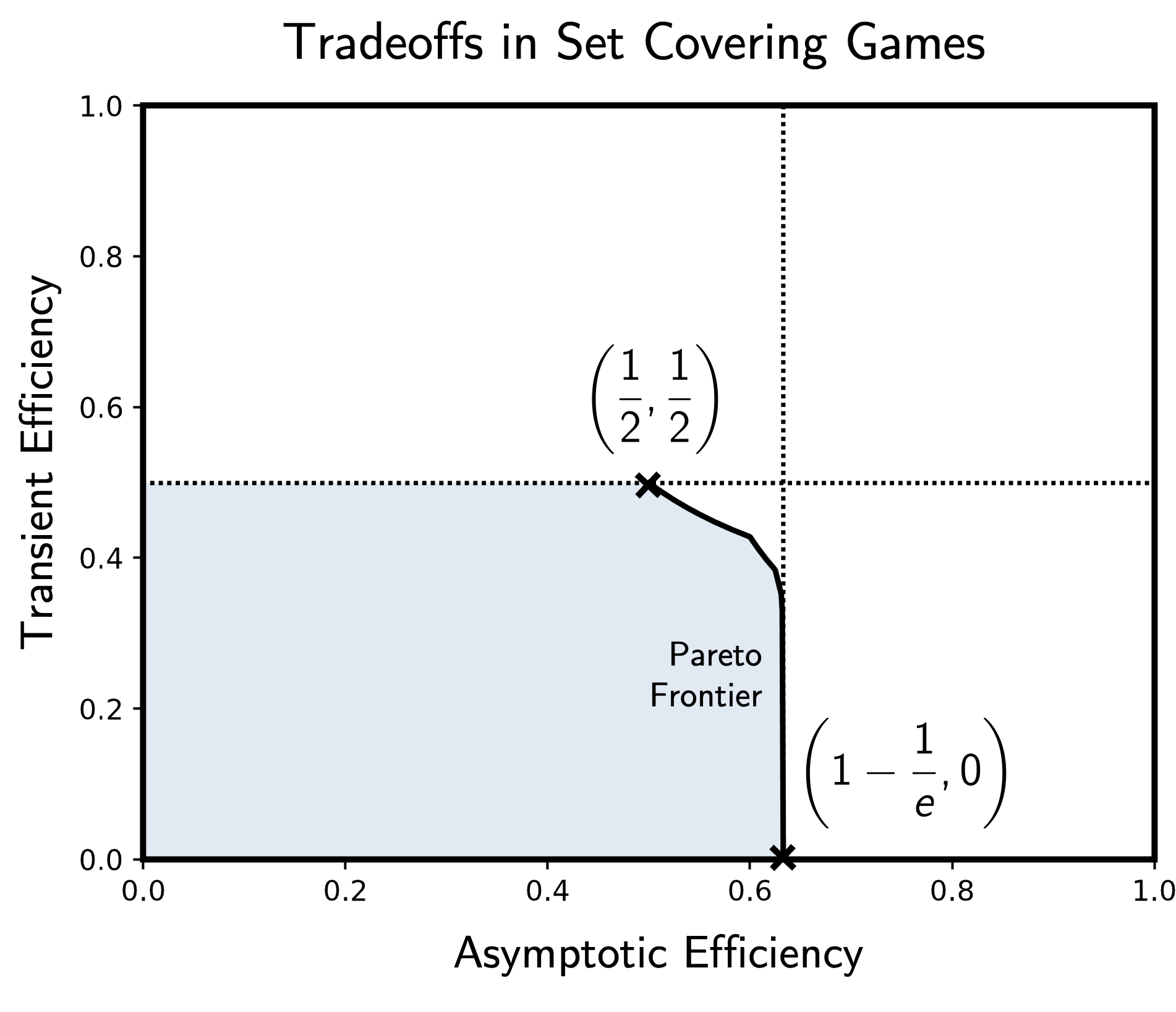}
    \caption{We depict the Pareto-optimal frontier of the one-round efficiency $\pob(\wsc, \wf; 1)$ versus the asymptotic efficiency guarantees $\pob(\wsc, \wf; \infty)$ that are possible with regards to the class of set-covering games. We note that the severe drop off in transient efficiency that results from optimizing the asymptotic efficiency.}
    \label{fig:tradepareto}
\end{figure}

Notably in Figure \ref{fig:tradepareto}, we see a stark drop-off in transient efficiency when the asymptotic efficiency is close to the optimal guarantee of $1 - \frac{1}{e}$. This extreme trade-off should prompt a more careful interpretation of asymptotic results, especially in the setting of resource allocation games.

\section{Illustrative Example}
\label{sec:illex}

To illustrate the results of the paper, we examine the average performance over $5$ rounds of the $\k$-round algorithm of three utility designs: the common interest utility design, the utility design that optimizes the efficiency for $\k=1$, and the utility design that optimizes the efficiency for $\k=\infty$. The average performance is measured across $100$ random instances of \emph{weapon-target assignment problems} (see Example \ref{ex:WTA}) with $20$ agents with a defense rate of $p_d = .5$. In each simulated instance, we set the number of targets that the agents can possibly defend to $30$. The values $v_r$ for each target $r$ are uniformly selected from the unit interval $[0, 1]$ and subsequently normalized by dividing by $\sum_{r \in \rr} v_r$. Each agent has $2$ actions available, in addition to the empty allocation $\emp$. Each action $a$ is a consecutive selection of $2$ resources chosen uniformly randomly from the resource set $\rr$. 

The resulting system welfare across $5$ rounds for each utility design is highlighted in Figure \ref{fig:numerical}, where the distributions of the system welfare across the randomized instances are depicted with a box and whisker plot. Note that the optimal allocation may also not achieve a $100\%$ detection rate. In Figure \ref{fig:numerical}, we see that worst instance of the optimal one-round performs better than the greedy and asymptotically optimal utility designs when $\k=1$. This is supported in the worst-case analysis presented in this paper. Additionally, we note that the resulting efficiency plateaus quickly, with almost no differences in efficiency after two rounds of best response - confirming that successive rounds give diminishing returns in system performance. Interestingly, on average, the differences in performance across utility designs is much more subtle.

\begin{figure}[ht]
    \centering
    \includegraphics[width=250pt]{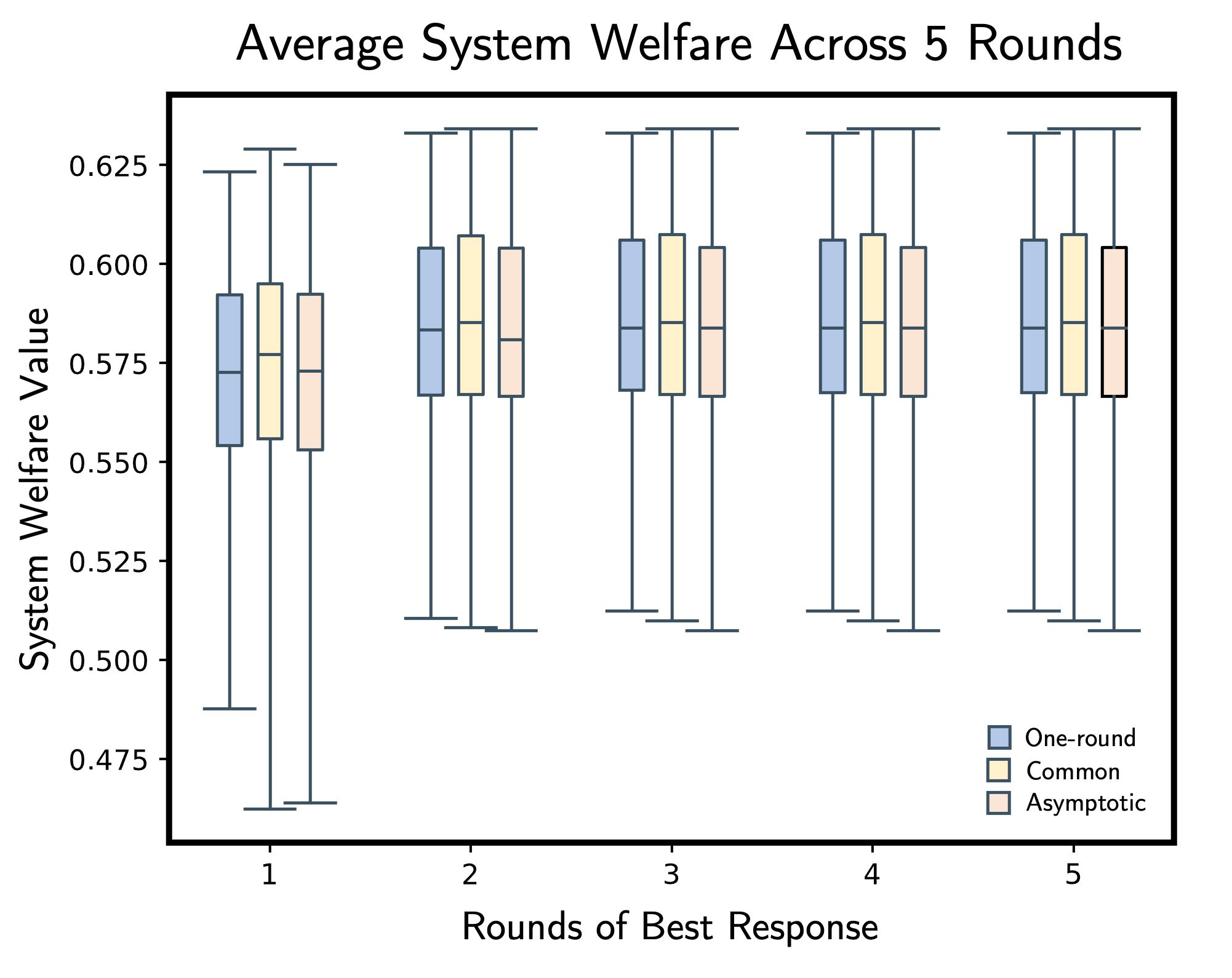}
    \caption{We plot the average rate of defense in a randomly generated set of weapon-target assignment problems with respect to three utility designs: the one-round optimal, the common interest, and the asymptotically optimal utility design. We see that in the short term, the one-round optimal design performs better in the worst case than the greedy and the asymptotically optimal utility designs.}
    \label{fig:numerical}
\end{figure}

\section{Conclusion}
\label{sec:conc}

This work focuses on characterizing the performance along a $\k$-round walk algorithm in submodular resource allocation problems. Using the language of game theory, this work shifts the emphasis from studying the system-level performance of the limit points of the $\k$-round walk to the transient states along each round of the algorithm. We can characterize the gap between the performance after one round and in the limit of the $\k$-round walk algorithm. We show that non-intuitively, this gap persists for any fixed finite number of rounds $\k$. Furthermore, we characterize the trade-offs when trying to optimize the utility functions to maximize the performance of the one-round and the limit points. Surprisingly, when optimizing for the asymptotic performance, the performance of the one-round walk may be arbitrarily bad. We additionally characterize the exact Pareto-frontier for these trade-offs in a sub-class of submodular resource allocation games. Future work may comprise of extending the results to other game models or consider average-case efficiency analysis.

\bibliographystyle{ieeetr}
\bibliography{references.bib}

\appendix
\label{sec:appendix}

We provide technical details for the proofs of the main results, as well as supporting discussion and lemmas, in the appendix. An overview of the provided proofs are as follows.

\begin{itemize}
    \item [(A)] We first introduce alternative formulations of the utility functions in Eq. \eqref{eq:utildef} to incorporate locality.
    \item [(B)] The connection between $\infty$-round walks and Nash equilibrium are established in Lemma \ref{lem:poblesspoa}, which allows for established results in price of anarchy to be used for asymptotic characterization.
    \item[(C)] We provide a proof of Proposition \ref{prop:oneroundC} that is based on previous work and results in price of anarchy.
    \item[(D)] A proof of Theorem \ref{thm:kroundC} is given, where game constructions provide the efficiency bounds for $\k \geq 1$ rounds. 
    \item[(E)] A proof of Theorem \ref{thm:submodtrade} is provided, where trade-offs are established for asymptotic and transient efficiency guarantees.
    \item[(F)] We provide a proof of Theorem \ref{thm:poapobtradeoff}, characterizing the exact Pareto-optimal frontier of the trade-offs for set covering welfare rules.
\end{itemize}

\noindent \emph{Notation.} Given a set $\S$, $\card{\S}$ represents its cardinality. We use the denotation $\w(0) = \f(0) = 0$. We also assume without loss of generality in the appendix that $\w_r(1) = 1$ for any resource $r$ \footnote{Consider any game $\G \in \setgm$. For any resource $r \in \rr$ with a welfare rule $\w_r$, we can define another game $\G' \in \setgm$ with instead $\card{\w_r(1)}$ copies of resource $r$ with a welfare rule $\w_r'(j) = \w_r(j)/\w_r(1)$. If $\card{\w_r(1)}$ is not integer, we can scale the number of resources uniformly and round to get arbitrarily close. Thus we can assume without loss of generality that $\w_r(1) = 1$.}. We define the bent welfare rule for some $b \geq 1$ and curvature $\cc \in [0, 1]$ as
\begin{equation}
\label{eq:wbc}
    \wbc(j) = (1-\cc) j + \cc \cdot \min \{j, b\}.
\end{equation}

\subsection{Equivalence of Marginal Contribution Utilities}
\label{subsec:MC}
We outline a key equivalence in order to reformulate the utility functions in Eq. \eqref{eq:utildef} to only depend on the agents' decisions locally. Given a utility function $\U$, we can define a new utility structure as $\U^{\rm{mc}}(a_i, a_{-i}) \doteq \U(a) - \U(\emp_i, a_{-i})$, sometimes referred to as the \emph{marginal contribution utility}\cite{marden2013distributed}. Following Eq. \eqref{eq:utildef}, this corresponds to the following expression.
\begin{equation}
\label{eq:mcdef}
    \U^{\rm{mc}}(a_i, a_{-i}) = \sum_{r \in a_i} \f_r(|a|_r),
\end{equation}
where $\f_r(j) \equiv \wf_r(j) - \wf_r(j - 1)$. From the assumptions on $\wf_r$, we have that $\f_r$ is non-negative and non-increasing and that $\f_r(1) = \wf_r(1)$. We observe that now $\U^{\rm{mc}}$ only depends on the resource utilization on the resources selected by agent $i$ while also satisfying the following preference equivalence.
\begin{fact}
For a given joint action $\alpha \in \ac$, the best response according to the marginal contribution utility 
\begin{equation}
\arg \max_{a_i \in \ac_i} \U^{\rm{mc}}(a_i, \alpha_{-i}) = \arg \max_{a_i \in \ac_i} \U(a_i, \alpha_{-i})
\end{equation}
is the same with respect to the original utility.
\end{fact}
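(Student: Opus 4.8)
The plan is to show that, for each fixed opponent profile $\alpha_{-i}$, the marginal contribution utility $\U^{\rm{mc}}(\cdot, \alpha_{-i})$ and the original utility $\U(\cdot, \alpha_{-i})$ differ only by a term that is constant in agent $i$'s own action $a_i$. Since adding or subtracting a constant cannot alter the set of maximizers of a function, the two $\arg\max$ sets over $\ac_i$ must then coincide, which is exactly the claimed preference equivalence.

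Concretely, I would unfold the definition directly. For any candidate action $a_i \in \ac_i$,
\[
\U^{\rm{mc}}(a_i, \alpha_{-i}) = \U(a_i, \alpha_{-i}) - \U(\emp_i, \alpha_{-i}).
\]
The crux is the baseline term $\U(\emp_i, \alpha_{-i}) = \sum_{r \in \rr} \wf_r(|(\emp_i, \alpha_{-i})|_r)$. When agent $i$ plays the empty action $\varnothing$, the resource count $|(\emp_i, \alpha_{-i})|_r$ equals the number of agents $j \neq i$ that select $r$ in $\alpha_{-i}$; this quantity is determined entirely by $\alpha_{-i}$ and carries no dependence on $a_i$. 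Hence $\U(\emp_i, \alpha_{-i})$ is a fixed scalar with respect to the optimization variable $a_i$. The equivalence is then immediate: the map $a_i \mapsto \U^{\rm{mc}}(a_i, \alpha_{-i})$ is obtained from $a_i \mapsto \U(a_i, \alpha_{-i})$ by subtracting this fixed scalar, so both attain their maximum over precisely the same subset of $\ac_i$, giving $\arg\max_{a_i} \U^{\rm{mc}}(a_i, \alpha_{-i}) = \arg\max_{a_i} \U(a_i, \alpha_{-i})$.

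As a consistency check, and to connect with the local form asserted in Eq. \eqref{eq:mcdef}, I would also verify the telescoping structure resource by resource: for any $r \notin a_i$ the two counts $|(a_i, \alpha_{-i})|_r$ and $|(\emp_i, \alpha_{-i})|_r$ agree and contribute nothing to the difference, whereas for $r \in a_i$ the count increases by exactly one, contributing $\wf_r(|a|_r) - \wf_r(|a|_r - 1) = \f_r(|a|_r)$. Summing recovers $\U^{\rm{mc}}(a_i, \alpha_{-i}) = \sum_{r \in a_i} \f_r(|a|_r)$, confirming both the stated form and the fact that $\U^{\rm{mc}}$ depends only on the resources in $a_i$. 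There is no genuine obstacle here; the single point demanding care is the observation that the subtracted baseline $\U(\emp_i, \alpha_{-i})$ truly does not vary with $a_i$, since the entire argument rests on that independence.
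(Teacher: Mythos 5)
Your argument is correct and is exactly the reasoning the paper relies on (the paper states this Fact without a written proof, the definition $\U^{\rm{mc}}(a_i,a_{-i}) = \U(a) - \U(\emp_i,a_{-i})$ making the constant-offset observation immediate). Your resource-by-resource telescoping check correctly recovers Eq.~\eqref{eq:mcdef} as well, so nothing is missing.
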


An immediate consequence is that the best response process is invariant to using the marginal contribution utility. Thus, we instead use $\U^{\rm{mc}}$ in Eq. \eqref{eq:mcdef} throughout the appendix to define the utility structure. Specifically we will refer to $\f_r$ as \emph{utility rules} and $\fw: \ww \to \R^{\N}_{\geq 0}$ with $\fw(\w_r) = \f_r$ as the utility design rather than using $\wf$ and $\wfw$.

\subsection{Price of Anarchy and $\infty$-round walks}

Note that resource allocation games are isomorphic to \emph{potential games}, and, as such, the limit points from any best response process must necessarily be in the set of Nash equilibrium $\NE \subseteq \ac$ of the game. Thus the limit points $\lim_{\k \to \infty} \sol$ of the $\k$-round walk are Nash equilibrium as well. We consider $\ne \in \NE$ to be a Nash equilibrium if any unilateral deviations are not preferable by any agent, or
\begin{equation}
    \ne_i \in \arg \max_{a_i \in \ac_i} \U(a_i, \ne_{-i}) \ \text{ for all } i \in \p.
\end{equation}
Measuring the quality of Nash equilibrium is done through the classical metric of \emph{price of anarchy} as follows.
\begin{equation}
    \poa(G) = \frac{\min_{a \in \NE}{\W(a)}}{\max_{a \in \mathcal{A}}{\W(a)}}.
\end{equation}
We similarly define $\poa(\ww, \wfw) = \inf_{\G \in \setgm} \poa(\G)$ mirroring Eq. \eqref{eq:effsetG}. The price of anarchy is a well understood metric, with a host of results on its characterization, complexity, and design \cite{roughgarden2009intrinsic}. As solutions of the $\infty$-round walk must also be Nash equilibrium, we have that $\lim_{\k \to \infty} \sol \subseteq \NE$. However, this inclusion may be strict, as not every Nash equilibrium may be reachable from the $\k$-round walk considered in Algorithm \ref{alg:cap2}. But in the next theorem, we show equivalence of price of anarchy and the efficiency of the $\infty$-round walk. Thus in the subsequent sections, we can use previous results in the literature on price of anarchy to quantify the efficiency of the $\infty$-round walk.

\begin{lemma}
\label{lem:poblesspoa}
    Let $\ww$ be a set of welfare rules. The efficiency of the $\infty$-round walk is equivalent to the price of anarchy
    \begin{equation}
    \label{eq:pobeqpoa}
        \pob(\ww, \wfw; \infty) = \poa(\ww, \wfw),
    \end{equation}
    if the utility rules $\f_{\ell} = \fw(\w_\ell)$ for $1 \leq \ell \leq m$ are non-increasing, and $\pob(\ww, \wfw; \infty) \geq \poa(\ww, \wfw)$ otherwise.
\end{lemma}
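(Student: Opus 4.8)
The plan is to prove the two inequalities separately: the bound $\pob(\ww, \wfw; \infty) \geq \poa(\ww, \wfw)$ holds for \emph{every} utility design, while the reverse inequality—hence the stated equality—is exactly where non-increasing utility rules are needed.

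For the always-valid direction, I would first invoke that resource allocation games are potential games, so any best-response sequence, in particular the $\infty$-round walk, converges to a limit point that is a Nash equilibrium; thus $\lim_{\k \to \infty}\sol \in \NE$ for every $\G \in \setgm$. This gives $\W(\lim_{\k \to \infty}\sol) \geq \min_{a \in \NE}\W(a)$, and dividing by the common optimal value $\max_{a \in \ac}\W(a)$ yields $\pob(\G; \infty) \geq \poa(\G)$. Taking the infimum over $\G \in \setgm$ on both sides preserves the inequality and delivers $\pob(\ww, \wfw; \infty) \geq \poa(\ww, \wfw)$. Since this argument uses nothing about the utility rules, it is precisely the statement that survives in the ``otherwise'' case.

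For the reverse inequality under the non-increasing hypothesis, the goal is to recover the worst-case Nash equilibrium as a walk limit without improving the efficiency ratio. I would fix an arbitrary $\G \in \setgm$, let $\ne$ be a worst Nash equilibrium (so $\W(\ne)/\max_{a}\W(a) = \poa(\G)$) and $\aopt$ an optimal allocation, and build a companion game $\G' \in \setgm$ whose optimal welfare still equals $\W(\aopt)$ but for which some realization of the walk from $\emp$ converges to a copy of $\ne$. The engine is a monotonicity observation driven by the non-increasing rules: if the walk is built so that the load on each resource never exceeds its load at $\ne$, then for each agent $i$ the marginal value of $\ne_i$ at any step, $\sum_{r \in \ne_i}\f_r(|a_{-i}|_r + 1)$, is at least its equilibrium value $\sum_{r \in \ne_i}\f_r(|\ne|_r)$, because $|a_{-i}|_r + 1 \leq |\ne|_r$ and $\f_r$ is non-increasing. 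Hence each coordinate of $\ne$ stays a best response relative to the empty action throughout the walk, so the configuration can be driven monotonically from $\emp$ up to $\ne$; the adversarial tie-breaking baked into $\W(\sol) = \min_{\alpha \in \sol}\W(\alpha)$ then selects this worst reachable limit. This yields $\pob(\G'; \infty) \leq \W(\ne)/\W(\aopt) = \poa(\G)$, and taking the infimum over all games gives $\pob(\ww, \wfw; \infty) \leq \poa(\ww, \wfw)$.

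The main obstacle is the construction of $\G'$: I must keep the optimal value $\W(\aopt)$ achievable—so the optimal-covering actions remain available in the action sets—while preventing the walk from defecting to them before settling at $\ne$. The monotonicity bound controls the comparison between $\ne_i$ and the empty action but not directly the comparison between $\ne_i$ and $\aopt_i$, whose relative attractiveness can flip at low load; the crux is therefore to arrange the resources and turn order (for instance, so that the optimal-covering actions are unattractive at the loads actually encountered along the walk) so that $\ne$, rather than some higher-welfare equilibrium, is the reachable limit. This is exactly where non-increasing rules are indispensable: without them, early uncrowded selections yield \emph{smaller} marginal value than later crowded ones, the monotone build-up from $\emp$ breaks down, and the worst equilibrium may be unreachable—leaving only the one-sided bound $\pob(\ww, \wfw; \infty) \geq \poa(\ww, \wfw)$.
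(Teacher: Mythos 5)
Your first direction is exactly the paper's: potential-game convergence gives $\lim_{\k \to \infty} \sol \subseteq \NE$, hence $\pob(\G;\infty) \geq \poa(\G)$ for every game and $\pob(\ww,\wfw;\infty) \geq \poa(\ww,\wfw)$ after taking infima, with no hypothesis on the utility rules. That part is fine.

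The reverse inequality is where your argument has a genuine gap, and you essentially concede it yourself. You propose to start from an \emph{arbitrary} worst-case game $\G$, isolate its worst Nash equilibrium $\ne$, and build a companion game $\G'$ in which the walk from $\emp$ climbs monotonically to $\ne$. Your monotonicity observation is correct as far as it goes: with non-increasing $\f_r$ and loads below their equilibrium values, $\sum_{r \in \ne_i}\f_r(|a_{-i}|_r+1) \geq \sum_{r \in \ne_i}\f_r(|\ne|_r)$, so $\ne_i$ dominates the empty action along the walk. But, as you note, this says nothing about the comparison against $\aopt_i$: the marginal value of $\aopt_i$ is \emph{also} inflated at low load (its resources are even emptier during the walk than under $(\aopt_i,\ne_{-i})$), so the equilibrium inequality $\U(\ne) \geq \U(\aopt_i,\ne_{-i})$ can flip mid-walk and the agent defects to a higher-welfare configuration. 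You label resolving this as ``the crux'' and then stop; no construction of $\G'$ is given, and it is not at all clear that an arbitrary game admits such a modification while preserving its optimal welfare and membership in $\setgmw$. As stated, the proof of $\pob(\ww,\wfw;\infty) \leq \poa(\ww,\wfw)$ is incomplete.

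The paper avoids this obstacle by not starting from an arbitrary game at all. It takes the linear-programming characterization of $\poa^{N_1}(\ww,\fw)$ from \cite{paccagnan2019utility}, extracts the optimal variables $\paop$, and builds a single structured game from them: $N_2$ agents, resource blocks $\rraxb$ of size proportional to $\paop$, and a cyclic overlap pattern between the $\ne$-selections and the $\aopt$-selections. In that game the LP feasibility constraint $\sum [\ay\f_\ell(\ay+\xx) - \bz\f_\ell(\ay+\xx+1)]\paop \geq 0$ simultaneously certifies that $\ne$ is a Nash equilibrium \emph{and} (combined with $\f_\ell$ non-increasing, which gives $\f_\ell(j) \geq \f_\ell(\ay+\xx)$ for the partially filled resources encountered mid-walk) that $\ne_i$ beats $\aopt_i$ at every step of the one-round walk from $\emp$. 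The welfare ratio of this game tends to $\poa(\ww,\fw)$ up to $\varepsilon_1+\varepsilon_2$ as $N_1, N_2 \to \infty$. If you want to complete your argument, you should replace the ``modify an arbitrary worst-case game'' step with this kind of explicit construction whose reachability is guaranteed by design rather than retrofitted.
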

\begin{proof}
For ease of notation, we remove the subscript of $\f^{\rm{mc}}_{\ell}$ as $\f_{\ell}$. Since $\lim_{\k \to \infty} \sol \subseteq \NE$, by definition, the efficiency guarantee of $\pob(\ww, \wfw; \infty) \geq \poa(\ww, \wfw)$ must be higher than the guarantee for the total set of Nash equilibrium.

Now we show the $\pob(\ww, \wfw; \infty) \leq \poa(\ww, \wfw)$ by a game construction $\G$, in which a Nash equilibrium with the efficiency arbitrarily close to $\poa(\ww, \fw)$ can result from a one-round walk. Let $\varepsilon_1 > 0$ and $\poa^n(\ww, \fw)$ refer to the price of anarchy for the set of games in $\setgm$ that have only $n$ number of agents. Note that $\poa^n(\ww, \fw)$ is non-increasing in $n$ and lower bounded by $0$. Therefore $\poa^n(\ww, \fw)$ is a convergent sequence in $n$ and for any $\varepsilon_1$, there exists an $N_1 \in \N$ such that $\poa^{N_1}(\ww, \fw) - \poa(\ww, \fw) \leq \varepsilon_1$.

Generalizing \cite[Theorem 2]{paccagnan2019utility} to a set of welfare rules provides a characterization of the price of anarchy for $N_1$ agents as $\poa^{N_1}(\ww, \fw) = Q^{-1}$ with
\begin{align}
    &Q = \max_{\paval} \sum_{\substack{1 \leq \ell \leq m, \\ \ay, \xx, \bz}} \w_{\ell}(\bz + \xx) \paval \label{eq:poaLP} \\
    \text{ s.t. } &\sum_{\substack{1 \leq \ell \leq m, \\ \ay, \xx, \bz}} [\ay \f_{\ell}(\ay + \xx) - \bz \f_{\ell}(\ay+\xx+1)] \paval \geq 0 \nonumber \\
    &\sum_{\substack{1 \leq \ell \leq m, \\ \ay, \xx, \bz}} \w_{\ell}(\ay + \xx) \paval = 1 \nonumber \\
    &\paval \geq 0, \nonumber
\end{align}
where $\ay, \xx, \bz \in \N$ with $1 \leq \ay + \xx+ \bz \leq N_1$. We refer to $\paop$ to denote the corresponding optimal variables for $\paval$ of the linear program. We construct a matching game $\G$ as follows. Let $N_2 > N_1$ be the number of agents in the game and $D = N_2 + \ay + \xx - 1$. For each $\ay$, $\xx$, $\bz, \ell$ pair and $1 \leq k \leq D$, we construct a set of resources $\rraxb$ with $\card{\rraxb} = \paop / D$ \footnote{While $\paop / D$ might not be an integer, we can scale $\card{\rraxb}$ uniformly and round to arrive at a game construction with the arbitrarily close efficiency guarantees.}. Each agent $i$ has three actions in its action set $\ac_i = \{\emp_i, \ne_i, \aopt_i\}$. Each agent $i$ selects $\{\rraxb\}_{i \leq k \leq \ay + \xx + i - 1}$ in $\ne_i$ for each pair $\ay, \bz, \xx$, $\ell$. If $\ay + \bz + \xx \leq i \leq N_2$, agent $i$ selects $\{\rraxb\}_{i - \bz \leq k \leq \xx + i - 1}$ in $\aopt_i$ for each pair $\ay, \bz, \xx$. Otherwise for $1 \leq i \leq \ay + \bz + \xx - 1$, $\aopt_i = \emp_i$ and agent $i$ doesn't select any resources in $\aopt_i$. This is shown in Figure \ref{fig:poapob} for one $\ay$, $\xx$, $\bz, \ell$. pair

\begin{figure}[ht]
    \centering
    \includegraphics[width=180pt]{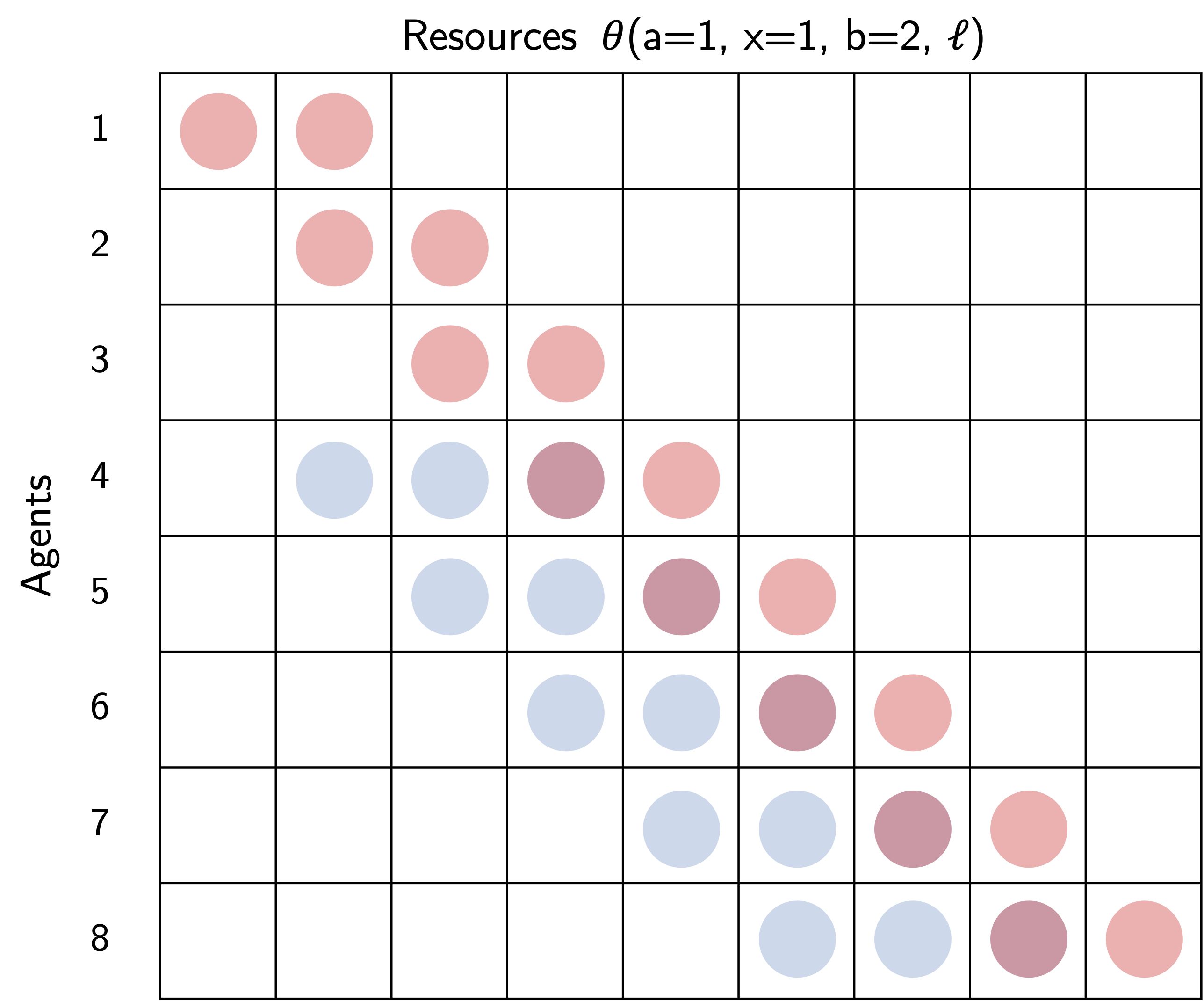}
    \caption{In this figure, rows represent players and columns represent resources. Red circles represent selections in $\ne$ and blue circles represent selections in $\aopt$.}
    \label{fig:poapob}
\end{figure}

We first confirm that the action $\ne$ is indeed a Nash equilibrium. Showing this for the first $\ay + \xx + \bz - 1$ agents is trivial, since no resources are selected in $\aopt_i$. For the rest of the agents, the utility difference of a unilateral deviation to $\aopt_i$ from $\ne_i$ is 
\begin{align*}
    &\U(\ne) - \U(\aopt_i, \ne_{-i}) \\
    &\geq \sum_{r \in \ne_i} \f_r(|\ne|_r) - \sum_{r \in \aopt_i} \f_r(|(\aopt_i, \ne_{-i})|_r) \\
    &\geq \sum_{\ay, \xx, \bz, \ell} [(\ay+\xx) \f_{\ell}(a+x) - \\
    & \quad \ \xx \f_{\ell}(\ay + \xx) - \bz \f_{\ell}(\ay + \xx + 1)] \cdot \card{\rraxb} \\
    &\geq \frac{1}{D} \sum_{\ay, \xx, \bz, \ell} [\ay \f_{\ell}(\ay + \xx) - \bz \f_{\ell}(\ay + \xx + 1)] \paop \\
    &\geq 0.
\end{align*}

The first inequality comes from the definitions of the utility function. The second inequality comes from counting the resources that are selected in the either $\ne_i$ or $\aopt_i$ by the agent in each set of resources in $\rraxb$. The third inequality comes from simplifying. The fourth inequality comes from the fact that since $\paop$ has to satisfy the inequality constraint in Eq. \eqref{eq:poaLP} to be feasible. Similarly, in a one-round walk, the best response for the first $\ay + \xx + \bz - 1$ agents is $\ne_i$. The best response for the other agents during the one-round walk is also $\ne_i$, since
\begin{align*}
    &\U(\ne_{j < i}, \ne_i, \emp_{j > i}) - \U(\ne_{j < i}, \aopt_i, \emp_{j > i}) \\
    &= \sum_{\ay, \xx, \bz, \ell} [\sum_{j=1}^{\ay+\xx} \f_{\ell}(i) - \xx \f_{\ell}(\ay + \xx) - \bz \f_{\ell}(\ay + \xx + 1)] \card{\rraxb} \\
    &\geq \frac{1}{D} \sum_{\ay, \xx, \bz, \ell} [\ay \f_{\ell}(\ay + \xx) - \bz \f_{\ell}(\ay + \xx + 1)] \paop \\
    &\geq 0.
\end{align*}

We use similar arguments as before, where the second inequality comes from the fact that $\f_{\ell}$ is non-increasing. Therefore, the Nash equilibrium $\ne$ is reached from an empty configuration in one-round. Additionally, since $\ne$ is a Nash equilibrium, the resulting action state after any $\k$ rounds can also be itself $\ne$. Therefore in this game, $\pob(\G; \infty) \leq \poa(\G)$. Now we calculate the efficiency of the Nash equilibrium $\W(\ne)$ with respect to $\W(\aopt)$. We have that
\begin{align*}
    \W(\ne) =& \sum_{\ay, \xx, \bz, \ell} \w_{\ell}(\ay + \xx) \cdot \paop \frac{N_2 - 2(\ay + \xx -1)}{N_2} \\
    & + 2 \sum_{\substack{1 \leq \ell \leq m \\ 1 \leq i \leq \ay + \xx -1}} \w_{\ell}(i) \frac{\paop}{N_2} = 1 + \mathrm{O}(\frac{1}{N_2}),
\end{align*}
where, since $\paop$ is feasible, then it satisfies the equality constraint that $\sum_{\ay, \xx, \bz, \ell} \w_{\ell}(\ay + \xx) \paop = 1$. $\mathrm{O}(\frac{1}{N_2})$ reflects that the rest of the terms are on order of $1/N_2$. Similarly, 
\begin{align*}
    \W(\aopt) = \sum_{\ay, \xx, \bz, \ell} \w_{\ell}(\bz + \xx) \cdot \paop \frac{N_2 - 3 (\bz + \xx -1)}{N_2} \\
     + 2 \sum_{\substack{1 \leq \ell \leq m \\ 1 \leq i \leq \bz + \xx -1}} \w_{\ell}(i) \frac{\paop}{N_2} = Q + \mathrm{O}(\frac{1}{N_2}),
\end{align*}
where, since $\paop$ is optimal, then $\sum_{\ay, \xx, \bz, \ell} \w_{\ell}(\bz + \xx) \paop = Q = \poa^{N_1}(\ww, \fw)^{-1}$. For any $\varepsilon_2$, we can choose $N_2$, such that $\mathrm{O}(\frac{1}{N_2}) \leq \varepsilon_2$, so $\poa(\G) \leq \poa^{N_1}(\ww, \fw)^{-1} + \varepsilon_2$. To put everything together, we have that 
\begin{align*}
    &\pob(\ww, \fw; \infty) \leq \pob(\G; \infty) = \poa(\G) \\
    &\leq \poa^{N_1}(\ww, \fw)^{-1} + \varepsilon_2 \leq \poa(\ww, \fw) + \varepsilon_1 + \varepsilon_2,
\end{align*}
and since $\varepsilon_1$ and $\varepsilon_2$ are arbitrary, we have the result.
\end{proof}

\begin{rem*}
A simple corollary from the above proof also shows that $\pob(\ww, \fw; \k) \leq \poa(\ww, \fw)$ for any $k \geq 1$ as well. Thus the efficiency guarantees of $\k$-round walks are upper bounded by the price of anarchy.
\end{rem*}

\subsection{Proof of Proposition \ref{prop:oneroundC}}
The equalities $\pob^*(\ww; 1) = 1 - \cc/2$ in Eq. \eqref{eq:OPTone} and $\pob(\ww, \fwc; 1) = 1/(1 + \cc)$ in Eq. \eqref{eq:CIoneinfty} comes from the results in \cite[Theorem $3$]{konda2024optimal}. The equality $\pob^*(\ww; \infty) = 1 - \cc/e$ in Eq. \eqref{eq:OPTinfty} comes from the fact that $\pob^*(\ww; \infty) = \sup_{\fw} \poa(\ww, \fw)$ by Lemma \ref{lem:poblesspoa} for the set of non-increasing utility rules and that $\sup_{\fw} \poa(\ww, \fw) = 1 -\cc/e$ comes from \cite[Theorem $1$]{chandan2021tractable}. 

Now we show $\pob(\ww, \fwc; \infty) = \left( 1 + \cc \right)^{-1}$. Since any best response with a common interest utility must increase the welfare $\W$, the limiting efficiency $\pob(\ww, \fwc; \infty) \geq \pob(\ww, \fwc; 1) = 1/(1 + \cc)$ is greater than the efficiency of the one-round walk. Since we consider welfare rules $\w_r$ that are submodular, then the utility rules $\fmc_r$ are non-increasing, and we can apply Lemma \ref{lem:poblesspoa} to have that $\pob(\ww, \fwc; \infty) = \poa(\ww, \fwc)$. From applying \cite[Corollary 1]{paccagnan2021utility} with the bent welfare rule $\wbc$ in Eq. \eqref{eq:wbc} gives
\begin{align*}
    &\poa(\ww, \fwc)^{-1} \geq \poa( \wbc, \fwc)^{-1} =  \\
    &1 + \max_{j \geq 1} \left\{ \frac{j}{\wbc(j)} \left[ 2\wbc(j) - \wbc(j-1) - \wbc(j+1) \right]\right\}
\end{align*}
Simplifying the inequality for $b = 1$ gives $\poa(\ww, \fwc) \leq \left( 1 + \cc \right)^{-1}$. Since $\pob(\ww, \fwc; \infty)$ is both upper bounded and lower bounded by $\left( 1 + \cc \right)^{-1}$, we have the result.

\subsection{Proof of Theorem \ref{thm:kroundC}}
\label{subsec:thmkapp}
In this section, we first provide upper bounds on the efficiency metric $\pob^*(\ww; \k)$. To do this, we construct a game $\G$ such that for any utility design $\fw$, rounds $\k \geq 1$, and curvature $\cc$, we have that $\pob(\ww, \fw; \k) \leq  \pob(\G; \k) \leq 1 - \cc/2$. Let $\cc$ be the curvature and consider the bent welfare rule $\wbc$ with $b = 1$ as in Eq. \eqref{eq:wbc} with $\wbc(2) = 2 - \cc$. Additionally, let $\f = \fw(\wbc)$ be the corresponding utility rules for a given utility design. A two-agent game $\G$ is constructed as follows. Let the resource set be $\rr = \rr_1 \cup \rr_2 \cup \rr_3$, where $\rr_j$ is a set of resources such that the ratio of resources satisfies $\card{\rr_1} = \card{\rr_2} = \f(2) \cdot \card{\rr_3}$. If $\f(2)$ is not a whole number, we can scale up $\card{\rr_j}$ uniformly and round $\f(2) \cdot \card{\rr_3}$ to get arbitrarily close to the given ratio. Let $x = \card{\rr_1}$. The action sets for the game construction the agents will be determined by $\f$ according to the following three cases: (\textbf{a}) $0 \leq \f(2) \leq (1-\cc)$, (\textbf{b}) $(1-\cc) \leq \f(2) \leq 1$, and (\textbf{c}) $\f(2) \geq 1$.

For case (a), Agent $1$'s actions are $\ac_1 = \{\emp_1, a_1^1 = \rr_1, a_1^2=\rr_2\}$. Agent $2$'s actions are $\ac_2 = \{\emp_2, a_2^1 = \rr_3, a_2^2 = \rr_1\}$. The optimal allocation is $\aopt = \{a_1^2, a_2^2\}$ resulting in a welfare of $2x$. An allocation that can occur after a one round walk is $\abr = \{a_1^1, a_2^1\}$ resulting in a welfare of $(1 + \f(2))x$. Therefore, $\pob(\G; 1) \leq \frac{(1 + \f(2))x}{2x} \leq 1 - \frac{\cc}{2}$ by assumption of $\f \leq 1-\cc$. Additionally, observe that $\abr$ is a Nash equilibrium and therefore is still the resulting allocation after any number of additional rounds $\k \geq 1$. Therefore $\pob(\ww, \fw; \k) \leq \pob(\G; \k) \leq 1 - \frac{\cc}{2}$ for this case of utility design.

For case (b), Agent $1$'s actions are $\ac_1 = \{\emp_1, a_1^1 = \rr_1, a_1^2=\rr_2\}$. Agent $2$'s actions are $\ac_2 = \{\emp_2, a_2^1 = \rr_3, a_2^2 = \rr_1\}$. The optimal allocation is $\aopt = \{a_1^2, a_2^2\}$ resulting in a welfare of $2x$. An allocation that can occur after a one-round walk is $\abr = \{a_1^1, a_2^2\}$ resulting in a welfare of $\wbc(2) \cdot x$. Therefore, $\pob(\G; 1) \leq \frac{\wbc(2) \cdot x}{2 x} = 1 - \frac{\cc}{2}$. For $\k \geq 2$, there is a best response path that leads to the end state $\abr$. This is achieved by reaching $a' = \{a_1^1, a_2^1\}$ in the first round. As $a'$ is a Nash action, the best response process can remains at $a'$ for $\k-1$ rounds and in the last round, switch to $\abr$. Therefore $\pob(\ww, \fw; \k) \leq \pob(\G; \k) \leq 1 - \frac{\cc}{2}$ for this case.

For case (c), Agent $1$'s actions are $\ac_1 = \{\emp_1, a_1^1 = \rr_1, a_1^2 = \rr_2 \}$. Agent $2$'s actions are $\ac_2 = \{\emp_2, a_2^1 = \rr_1, a_2^2 = \rr_3 \}$. The optimal allocation is $\aopt = \{a_1^2, a_2^2 \}$ resulting in a welfare of $(1 + \f(2))x$. An allocation that can occur after a one round walk is $\abr = \{a_1^1, a_2^1 \}$ resulting in a welfare of $\wbc(2) \cdot x$. Therefore, $\pob(\G; 1) = \frac{\wbc(2) \cdot x}{(1 + \f(2)) x} \leq 1 - \frac{\cc}{2}$ by assumption of $\f(2) > 1$. Additionally, observe that $\abr$ is a Nash equilibrium and therefore is still the resulting allocation after any number of additional rounds. Therefore $\pob(\ww, \fw; \k) \leq \pob(\G; \k) \leq 1 - \frac{\cc}{2}$ for this case.

Since $\f = \fw(\wbc)$ was chosen arbitrarily, we have that the upper bound holds for any utility design and we have shown that $\pob^*(\ww; \k) \leq 1 - \cc/2$. Furthermore, based on our game construction, the efficiency bounds hold even when we relax the class of best response dynamics that we consider. Since the game construction comprises of only two agents, allowing agents to best respond multiple times during a round or best respond out of order of round-robin does not improve the efficiency guarantees that result from the given game $\G$.

\begin{figure}[ht]
    \centering
    \includegraphics[width=100pt]{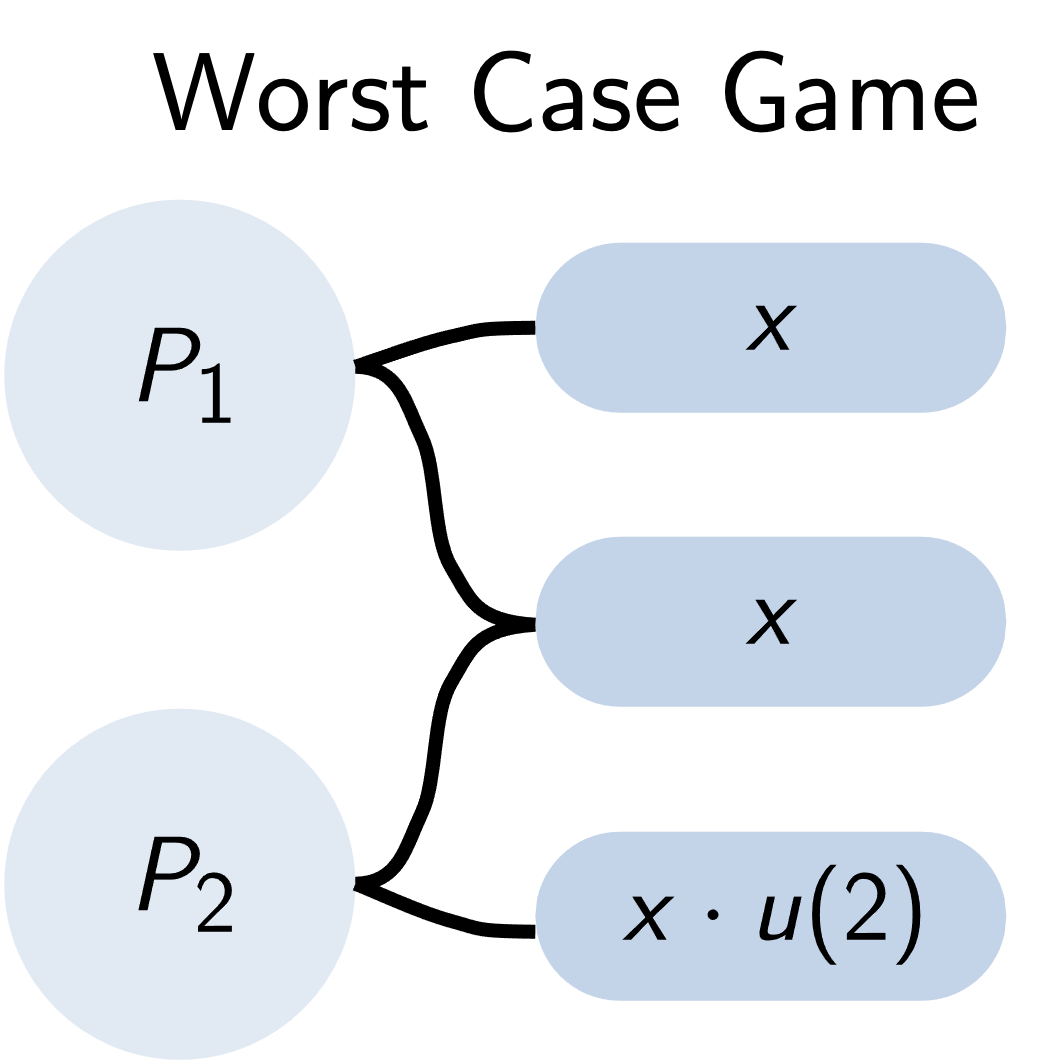}
    \caption{The worst case game construction achieving the $\k$-round walk guarantee dictated in Equation \eqref{eq:klessC}. P1 and P2 represent the two agents and $\card{x}$ and $\card{x \cdot u(2)}$ represent the size of the resources. The black lines represent the selections in the different joint actions by the two agents.}
    \label{fig:worst}
\end{figure}

\subsubsection{Common Interest Utility}
First, we note that since any best response improves the welfare under the common interest utility, we must have that $\pob(\ww, \fwc; \k) \geq \pob(\ww, \fwc; 1) = (1 + \cc)^{-1}$. Now we show that the upper bound $\pob(\ww, \fwc; \k) \leq (1 + \cc)^{-1}$ to complete the equality in Eq. \eqref{eq:klessCCI}. As before, a game $\G$ is constructed such that under the common interest design $\fwc$, $\k \geq 1$, and curvature $\cc$, we have that $\pob(\ww, \fwc; \k) \leq  \pob(\G; \k) \leq (1 + \cc)^{-1}$. Let $\G$ have $n$ players with a resource set $\rr = \rr^{\rm{opt}} \cup \rr^{\rm{both}} \cup \{r^n\}$ with $\card{\rr^{\rm{opt}}} = n$ and $\card{\rr^{\rm{both}}} = n -1$. Each agent $i$ has three actions in its action set $\ac_i = \{\emp_i, \abr_i, \aopt_i\}$. The resources are selected by the agents in the following manner: each resource $r_j^{\rm{opt}} \in \rr^{\rm{opt}}$ is selected by agent $j$ in action $\aopt_j \ni r_j^{\rm{opt}}$ for all $1 \leq j \leq n$; each resource $r_j^{\rm{both}} \in \rr^{\rm{both}}$ is selected by agent $j+1$ in action $\aopt_{j+1} \ni r_j^{\rm{both}}$ and by agent $j$ in action $\abr_{j} \ni r_j^{\rm{both}}$ for all $1 \leq j \leq n-1$; agent $n$ selects the resource $r^n$ in action $\abr_n$. See Figure \ref{fig:worstCI} for a visual representation of this game.

Given a curvature $\cc$, consider two bent welfare rules $\w_1, \w_2 \in \ww$ with curvature $\cc$ such that $\w_1 = \wbc$ with $b=1$ and $\w_2 = \cc \cdot \w_1$. For any $r \in \rr^{\rm{both}} \cup \{r^n\}$, let the corresponding welfare rule be $\w_r = \w_1$ and for any $r \in \rr^{\rm{opt}}$, let the corresponding welfare rule be $\w_r = \w_2$. Under this game construction it can be seen that under $\abr$, each resource $r \in \rr^{\rm{both}} \cup \{r^n\}$ is selected by exactly one agent, resulting in a welfare of $\W(\abr) = n$; also, under $\aopt$, each resource $r \in \rr^{\rm{both}} \cup \rr^{\rm{opt}}$ is selected by exactly one agent, resulting in a welfare of $\W(\aopt) = (n-1)(1+\cc) + \cc$. Assuming that $\abr$ is the joint action that results after $\k$ rounds, we have that $\pob(\G, \k) \leq \frac{n}{(n-1)(1+\cc) + \cc}$. Limiting the number of agents $n \to \infty$ to infinity gives the result. To verify that $\abr$ can result after $\k$ rounds, observe that for agent $1$ selecting $\abr_1$ over $\aopt_1$ results in a higher system welfare. After that, agents $2$ through $n$ are indifferent between $\abr_j$ and $\aopt_j$ given that the previous $i < j$ players have selected $\abr_i$. Therefore, $\abr$ is the resulting allocation after one round. Additionally, $\abr$ is a Nash equilibrium, so after any number of rounds $\k$, the joint action $\abr$ is still the result of a $\k$-round walk.

\begin{figure}[ht]
    \centering
    \includegraphics[width=200pt]{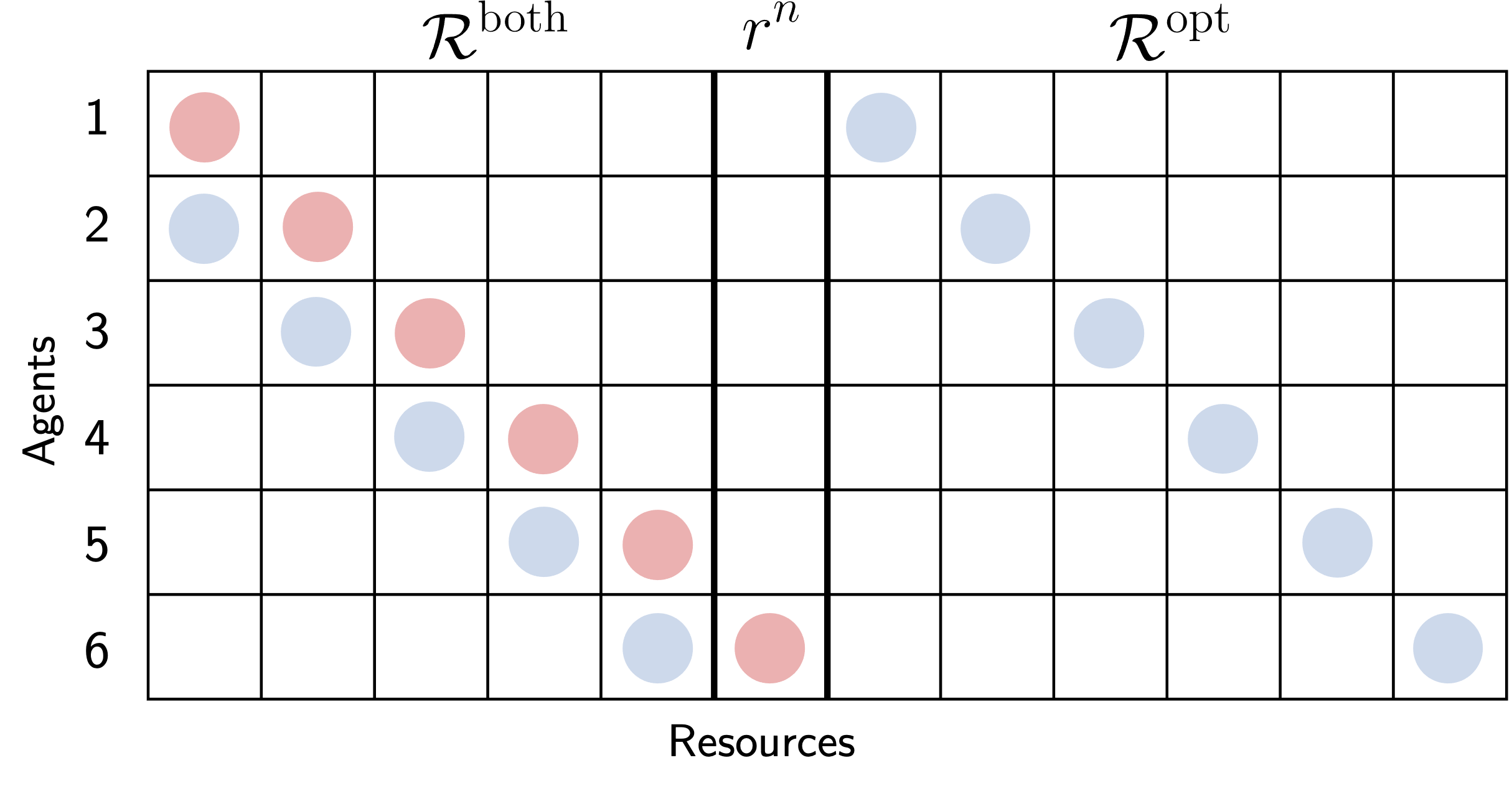}
    \caption{The worst case game construction achieving the $\k$-round walk guarantee dictated in Equation \eqref{eq:klessCCI}. In this figure, rows represent players and columns represent resources. Red circles represent selections in $\abr$ and blue circles represent selections in $\aopt$. }
    \label{fig:worstCI}
\end{figure}

\subsection{Proof of Theorem \ref{thm:submodtrade}}
{ \color{black}
We show the trade-offs in Theorem \ref{thm:submodtrade} that result from considering utility designs that maximize the one-round walk efficiency versus the $\infty$-round walk. We first show the equality in Eq. \eqref{eq:tradfrontone}. From Lemma  \ref{lem:poblesspoa}, we have 
\begin{equation*}
    \pob(\ww, \fw_1^*; \infty) = \poa(\ww, \fw_1^*) \geq \pob(\ww, \fw_1^*; 1) = 1 -\cc/2,
\end{equation*}
since $\fw_1^*$ is a non-increasing utility design, which is shown in \cite[Appendix Section B]{konda2024optimal}. We now show $\poa(\ww, \fw_1^*) \leq 1 -\cc/2$. Consider the bent rule $\wbc$ in Eq. \eqref{eq:wbc} with $b = 1$ and a utility rule $\f_1$ with $\f_1(1) = 1$ and $\f_1(2) = (2-2\cc)/(2-\cc)$. In \cite[Appendix Section C.2]{konda2024optimal}, it was shown that $\f_1$ is the utility rule that maximizes the one-round efficiency for $\ww = \{\wbc\}$. It can be easily verified that $\f_1$ satisfies the assumptions of \cite[Theorem 2]{paccagnan2021utility} and we can derive the price of anarchy as 
\begin{equation*}
    \poa(\wbc, \f_1)^{-1} = {\scriptstyle \max_{1 \leq l \leq j} \left\{\frac{\wbc(l) + j \f_1(j) - l \f_1(j+1)}{\wbc(j)} \right\}.}
\end{equation*}
Under $j=1$ and $l=1$, we have that $\poa(\wbc, \f_1) \leq 1 - \cc/2$. Using the fact that $\poa(\ww, \fw_1^*) \leq \poa(\wbc, \f_1)$, we have the upper bound as well.

We show that $\pob(\ww, \fw_{\infty}; 1) \leq 1 + \frac{(\cc-3)\cc}{(2-\cc)e + \cc}$ in Lemma \ref{lem:bCpoa} and $\pob(\ww, \fw_{\infty}; 1) = 0$ for $\cc = 1$ in Lemma \ref{lem:bCpobbad}. These lemmas are stated below.

\begin{lemma}
\label{lem:bCpoa}
Let the set $\ww$ comprise of all submodular welfare rules $\w$ that have curvature of at most $\cc \in [0, 1]$. Consider $\fw^*_{\infty}$ to be the utility design that achieves the optimal $\pob^*(\ww; \infty)$. Then we have that
\begin{equation}
    \label{eq:tradfrontpoa2}
    \pob(\ww, \fw^*_{\infty}; 1) \leq 1+\frac{\left(\cc-3\right)\cc}{\left(2-\cc\right)e+\cc}.
\end{equation}
\end{lemma}
\begin{proof}
From results in \cite[Lemma 1 iii]{chandan2021tractable}, the utility design that optimizes the price of anarchy $\fw_{\rm{poa}} \doteq \arg \max_{\fw} \poa(\ww, \fw)$ have non-increasing utility rules. Since these utility rules are non-increasing, under Lemma $1$, the resulting asymptotic efficiency $\pob(\ww, \fw_{\rm{poa}}, \infty) = \poa(\ww, \fw_{\rm{poa}})$ and thus $\fw_{\rm{poa}} \equiv \fw^*_{\infty}$ is the utility design that maximizes the asymptotic efficiency.

To construct the upper bound in Eq. \eqref{eq:tradfrontpoa2}, we characterize the one-round efficiency of the asymptotically optimal utility design against the bent welfare rule $\wbc$ for $b = 1$. We have that $\pob(\ww, \fw^*_{\infty}; 1) \leq \pob(\w^{1, \cc}, \fa; 1)$, where $\fa \equiv \fw^*_{\infty}(\w^{1, \cc})$. We can use the linear program given in \cite[Corollary 1]{konda2024optimal} to characterize the one-round efficiency. Under only the bent welfare rule, the constraint for a given $z$, $y$ simplifies to
\begin{align*}
    \beta \cdot \w^{1, \cc}(y) \geq \sum_{i=1}^{y} \fa(i)- z \min_{1 \leq i \leq y+1} \fa(i) + \w^{1, \cc}(z).
\end{align*}
According to \cite[Lemma 1]{chandan2021tractable}, the asymptotically optimal utility rule $\fa$ for the bent welfare rule is given by the following recursive equation
\begin{align*}
    \fa(1) &= 1, \\
    \fa(j+1) &= \max \{j \fa(j) - \rho \w^{1, \cc}(j) + 1, 1 - \cc \},
\end{align*}
with $\rho = (1 - \cc/e)^{-1}$. Now we characterize the binding constraints for $z$. Since $\fa(j) \leq 1$ for all $j$ and $\fa(j) \geq \w^{1, \cc}(j) - \w^{1, \cc}(j-1)$ for $ j \geq 2$, the binding constraint is when $z = 1$. Coupled with the fact that $\fa$ is non-increasing, we can simplify the one-round characterization as
\begin{align*}
    \pob(\w^{1, \cc}, \fa; 1)^{-1} &= \max_{y \geq 1} \left\{\frac{1 + \sum_{i=1}^{y} \fa(i) - \fa(y+1)}{\w^{1, \cc}(y)} \right\} \\
    & \geq \frac{(2 - \cc)e + \cc}{(2 - \cc)(e - \cc)},
\end{align*}
where the second inequality comes from only considering $y = 2$ and solving for $\fa(j)$ explicitly for $j = 2$ and $j = 3$. Taking the reciprocal on both sides and simplifying gives the expression in Eq. \eqref{eq:tradfrontpoa2}.
\end{proof}

\begin{lemma}
\label{lem:bCpobbad}
Let the set $\ww$ comprise of all submodular welfare rules $\w$ that have curvature of at most $\cc=1$. Consider $\fw^*_{\infty}$ to be the utility design that achieves the optimal $\pob^*(\ww; \infty)$. Then $\pob(\ww, \fw^*_{\infty}; 1) = 0$.
\end{lemma}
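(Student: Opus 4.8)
The plan is to specialize the one-round analysis already set up in Lemma \ref{lem:bCpoa} to $\cc = 1$ and push it far enough to see the characterization diverge. First I would observe that the bent welfare rule $\w^{1,1}$ coincides with the set covering rule $\wsc$ (both equal $1$ for every $j \geq 1$), so it lies in $\ww$ and the reduction $\pob(\ww, \fw^*_{\infty}; 1) \leq \pob(\wsc, \fa; 1)$ with $\fa = \fw^*_{\infty}(\wsc)$ holds exactly as in the previous lemma. With $\rho = (1-1/e)^{-1} = e/(e-1)$ and $\w^{1,1}(j) = 1$, $1-\cc = 0$, the recursion for $\fa$ becomes
\begin{equation*}
    \fa(1) = 1, \qquad \fa(j+1) = \max\{\, j \fa(j) - \rho + 1,\; 0 \,\}.
\end{equation*}

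Next I would solve this recursion in closed form. Dividing $\fa(j+1) = j\fa(j) + (1-\rho)$ by $j!$ telescopes the terms and yields
\begin{equation*}
    \fa(j) = (j-1)!\left[\,1 - \frac{1}{e-1}\sum_{\tau=1}^{j-1}\frac{1}{\tau!}\,\right].
\end{equation*}
Because $\sum_{\tau=1}^{j-1} 1/\tau! < e-1$ for every finite $j$, the bracketed factor stays strictly positive, so the outer $\max$ with $0$ never binds and this formula is valid; it also confirms $\fa$ is positive and non-increasing. The key quantitative step is to control the decay: rewriting the bracket as the exponential tail $\frac{1}{e-1}\sum_{\tau=j}^{\infty} 1/\tau!$ and using $\sum_{\tau=j}^{\infty} 1/\tau! = \frac{1}{j!}(1 + o(1))$ gives $\fa(j) = \frac{1}{(e-1)j}(1 + o(1))$.

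With this asymptotic in hand the conclusion follows quickly. Since $\fa(j)$ decays like the harmonic series, $\sum_{i=1}^{y}\fa(i) \to \infty$ as $y \to \infty$. Feeding this into the one-round characterization of Lemma \ref{lem:bCpoa}, specialized to $\w^{1,1}(y) = 1$, and using $\fa(y+1) \leq \fa(1) = 1$,
\begin{equation*}
    \pob(\wsc, \fa; 1)^{-1} = \max_{y \geq 1}\left\{1 + \sum_{i=1}^{y}\fa(i) - \fa(y+1)\right\} \geq \sum_{i=1}^{y}\fa(i),
\end{equation*}
whose right-hand side is unbounded in $y$. Hence $\pob(\wsc, \fa; 1)^{-1} = \infty$, so $\pob(\wsc, \fa; 1) = 0$, and the reduction $\pob(\ww, \fw^*_{\infty}; 1) \leq \pob(\wsc, \fa; 1)$ forces the claimed $\pob(\ww, \fw^*_{\infty}; 1) = 0$.

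I expect the main obstacle to be the sharp decay estimate $\fa(j) \sim \frac{1}{(e-1)j}$: establishing divergence of $\sum_j \fa(j)$ requires identifying the bracketed factor as a genuine exponential tail and estimating it to leading order, rather than merely noting that the factor vanishes as $j \to \infty$. The telescoping by $j!$ is the device that makes this estimate transparent, and it is exactly what ties the closed form to the factorial expression appearing in Theorem \ref{thm:poapobtradeoff} at the endpoint $Q = 1 - 1/e$.
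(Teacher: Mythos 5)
Your proof is correct, and its analytic core---showing that the asymptotically optimal utility rule decays like $\fa(j) \sim \tfrac{1}{(e-1)j}$ so that $\sum_j \fa(j)$ diverges---is exactly the engine of the paper's own argument. Where you genuinely differ is in how that divergence is converted into a zero efficiency guarantee, and in the level of generality. The paper builds an explicit $n$-agent game on the bent rule $\wbc$ for an \emph{arbitrary} $b \geq 1$: resource blocks $\rr_1, \dots, \rr_{n+1}$ sized proportionally to $\fa(j)$, with every agent's one-round best response stacking on $\rr_1$ while the optimum spreads out, giving the ratio $vb / (v \sum_i \fa(i)) \to 0$; it then solves the recursion via an LTV state-transition computation and a telescoping identity to get $\fa(t) = \mathrm{O}(1/t)$. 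You instead specialize to $b = 1$ (correctly noting $\w^{1,1} = \wsc \in \ww$), reuse the LP characterization of the one-round efficiency from \cite{konda2024optimal} already deployed in Lemma \ref{lem:bCpoa}, and obtain the closed form $\fa(j) = \tfrac{(j-1)!}{e-1} \sum_{\tau \geq j} 1/\tau!$, whose identification as an exponential tail immediately yields the harmonic decay and ties the computation to the $Q = 1 - 1/e$ endpoint of Theorem \ref{thm:poapobtradeoff}. Your route is shorter and cleaner; the paper's is self-contained (an explicit bad game rather than reliance on the LP) and proves degeneracy for every $b$, not just $b=1$. One point worth stating explicitly in your version: you only need each $(z=1, y)$ constraint of the LP to lower-bound $\beta = \pob(\wsc,\fa;1)^{-1}$, not the full binding-constraint analysis of Lemma \ref{lem:bCpoa}, and the passage to unbounded $y$ is legitimate because $\setgmw$ contains games with arbitrarily many agents; with that observation your argument is complete.
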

\begin{proof}
By definition, $\pob(\ww, \fw^*_{\infty}; 1) \geq 0$ must be greater than zero. For the upper bound, we construct a game $\G \in \mathcal{G}_{\ww, \fw^*_{\infty}}$ such that one-round walk efficiency is $\pob(\G; 1) = 0$. Consider a game $\G$ with $n$ players as follows. We partition the resource set as $\rr = \bigcup_{1 \leq j \leq n+1}\rr_j$. Every resource $r \in \rr$ is endowed the local welfare rule $\w_r = \wbc$ as the bent welfare rule with curvature of $\cc = 1$ for some fixed $b \geq 1$, as defined in Eq. The corresponding utility rule is $\fa = \fw^*_{\infty}(\wbc)$ is the following recursive expression from \cite[Lemma 1]{chandan2021tractable}\footnote{The recursive expression found in \cite{chandan2021tractable} is for the utility rule that maximizes the price of anarchy. Since this utility rule is non-increasing, we can apply Lemma \ref{lem:poblesspoa} to translate the results to $\infty$-round walks.},
\begin{align*}
    \fa(1) &= 1 \\
    \fa(j+1) &= \frac{1}{b}[j \fa(j) - \rho^b \min \{j, b\}] + 1,
\end{align*}
with $\rho^b = (1 - \frac{b^b e^{-b}}{b!})^{-1}$. The number of resources in each set is $\card{\rr_1} = v$ and $\card{\rr_{j+1}} \sim v \cdot \fa(j)$ for $1 \leq j \leq n$ and for some $v \geq 0$. If $\fa(j)$ is not a whole number, we can scale $v$ up and round to get arbitrarily close to the correct ratio of resources. Agent $i$ selects $\rr_1 = \abr_i$ and $\rr_{i+1} = \aopt_i$ in each of its actions. It can be verified that $\abr$ is a joint action that can result after a one round walk. Therefore, the efficiency is upper bounded by
\begin{equation*}
\pob(\G; 1) \leq \frac{\W(\abr)}{\W(\aopt)} = \frac{vb}{v \sum_{1 \leq i \leq n}\fa(i)}.
\end{equation*}

Now we show that as we increase $n$, the series $\sum_{1 \leq i \leq n}\fa(i)$ diverges, and the efficiency can get arbitrarily bad as the number of agents increase.
To construct the closed form expression of $\fa(j)$, we construct the following LTV state space system with $\fa(j) := x(t)$
\begin{align*}
    x(t+1) &= A(t) x(t) + s(t) \quad \quad  A(t) = \frac{t}{b} \\
    s(t) &= 1 - \frac{\rho^b}{b} \min(t, b)
\end{align*}
Solving for the solution $x(t)$ using the state transition matrix with the initial condition $x(1) = 1$ results in the following expression
\begin{align*}
    x(t) &= \prod_{\tau = 1}^{t}\frac{\tau}{b} + \sum_{T=1}^{t-1} \big[ \big(1 - \frac{\rho^b}{b} \min(t, b) \big) \prod_{\tau = T+1}^{t-1} \frac{\tau}{b} \big] \\
    &= \frac{t!}{b^t} \bigg(1 + \sum_{T=1}^{t} \frac{b^T}{T!}\big(1 - \frac{\rho^b}{b} \min(t, b) \big) \bigg)
\end{align*}

If $t \geq b$, then
\begin{align*}
    x(t) &= \frac{t!}{b^t} \bigg(1 - (e^b - 1)(\rho - 1) + \sum_{T=t+1}^{\infty} \frac{b^T}{T!}\big(\rho^b - 1 \big) + \\
    &\sum_{T=1}^{b} \frac{b^T}{T!}\frac{\rho^b (b-T)}{b}\bigg) \\
    &=  \frac{t!}{b^t} \sum_{T=t+1}^{\infty} \frac{b^T}{T!}\big(\rho^b - 1 \big) \\
    &\geq (\rho^b - 1) \frac{b}{t + 1} \\
    &\sim \mathrm{O}(\frac{1}{t})
\end{align*}

\noindent The first equality results from splitting the summation and the second equality will be shown later. Since $x(t)$ is on the order of $\frac{1}{t}$, the series $\sum_{i=1}^{N} \fa(i)$ diverges and the claim is shown. Now we verify the equality

\begin{align*}
    \sum_{T=1}^{b} \frac{b^T}{T!}\frac{\rho^b (b-T)}{b} &= (e^b - 1)(\rho^b - 1) - 1 \\ 
    \sum_{T=1}^{b} \frac{b^T(b-T)}{b T!} &= \frac{1}{\rho^b} \big(e^b \rho^b - e^b - \rho^b \big) \\
    \sum_{T=1}^{b}\frac{b^{T}}{T!} - \sum_{T=1}^{b} \frac{b^{T-1}}{(T-1)!} &= \big(e^b - 1 - e^b (1 - \frac{b^b e^{-b}}{b!}) \big) \\
    \frac{b^b}{b!} - 1 &=  \frac{b^b}{b!} - 1
\end{align*}
The last equality results from recognizing the terms on the left hand side as a telescoping sum.
\end{proof}

\subsection{Proof of Theorem \ref{thm:poapobtradeoff}}

To characterize the Pareto optimal frontier in Eq. \eqref{eq:tradeoffsetcov}, we first derive the closed form solution for the one-round walk efficiency $\pob(\wsc, \f; 1)$ specifically for the set covering welfare rule $\wsc$.

\begin{lemma}
\label{lem:setpob}
Let $\ww = \{\wsc\}$, where $\wsc$ is the set covering welfare rule defined in Eq. \eqref{eq:wscdef}, and $\fw(\wsc) = \f$ be the corresponding utility rule. Then the one-round walk efficiency guarantee is
\begin{equation}
    \label{eq:pobwscf}
    \pob(\wsc, \f; 1) =\left[ \sum_{i \in \N}{\f(i)} - \min_{i \in \N}{\f(i)} + 1 \right]^{-1}.
\end{equation}
\end{lemma}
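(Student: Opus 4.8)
The plan is to reduce the problem to the linear-programming characterization of one-round efficiency from \cite[Corollary 1]{konda2024optimal}, the same device used in the proof of Lemma \ref{lem:bCpoa}, and then specialize it to the single welfare rule $\wsc$. For a single welfare rule $\w$, that characterization expresses $\pob(\ww, \f; 1)^{-1}$ as the smallest $\beta$ satisfying
\[
  \beta \cdot \w(y) \;\geq\; \sum_{i=1}^{y} \f(i) \;-\; z \min_{1 \leq i \leq y+1} \f(i) \;+\; \w(z)
\]
for all integers $y, z \geq 0$; equivalently $\pob(\ww,\f;1)^{-1}$ equals the maximum of the right-hand side divided by $\w(y)$. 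Since $\f$ is non-increasing we have $\min_{1 \leq i \leq y+1} \f(i) = \f(y+1)$, and substituting $\wsc(y) = 1$ for $y \geq 1$ (the cases $y = 0$ give $\beta \cdot 0 \geq \cdots$, which hold trivially and impose no constraint) reduces the characterization to
\[
  \pob(\wsc, \f; 1)^{-1} = \max_{y \geq 1,\; z \geq 0} \left[ \sum_{i=1}^{y} \f(i) - z\,\f(y+1) + \wsc(z) \right].
\]

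I would first carry out the maximization over $z$ for fixed $y$. The term $-z\,\f(y+1)$ is non-increasing in $z$ because $\f(y+1) \geq 0$, while $\wsc(z)$ jumps from $0$ to $1$ at $z=1$ and stays there; comparing $z=0$ with $z=1$ the gain is $1 - \f(y+1) \geq 0$ since $\f(y+1) \leq \f(1) = 1$, and every $z \geq 2$ is dominated by $z=1$. Hence $z=1$ is optimal and
\[
  \pob(\wsc, \f; 1)^{-1} = \max_{y \geq 1} h(y), \qquad h(y) := 1 + \sum_{i=1}^{y} \f(i) - \f(y+1).
\]

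Next I would maximize over $y$ by showing $h$ is non-decreasing: a one-line computation gives $h(y+1) - h(y) = 2\f(y+1) - \f(y+2) \geq \f(y+1) \geq 0$, using that $\f$ is non-increasing and non-negative. Therefore the supremum is attained in the limit $y \to \infty$; since $\f$ is non-increasing and bounded below, $\f(y+1) \to \min_{i \in \N} \f(i)$ and $\sum_{i=1}^{y} \f(i) \to \sum_{i \in \N} \f(i)$, so $\pob(\wsc, \f; 1)^{-1} = 1 + \sum_{i \in \N} \f(i) - \min_{i \in \N} \f(i)$, and taking reciprocals yields \eqref{eq:pobwscf}.

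I expect the main difficulty to be careful bookkeeping rather than a conceptual hurdle: faithfully importing and specializing the cited LP constraint (in particular checking that the $y=0$ boundary is vacuous and that $z=1$ is genuinely the binding column), and handling the divergent regime $\sum_{i \in \N} \f(i) = \infty$ consistently, where $h(y) \to \infty$ and the formula correctly returns $\pob(\wsc, \f; 1) = 0$. It is worth noting that whenever the series converges the limit $\min_{i \in \N} \f(i)$ is necessarily $0$, so the claimed expression remains valid in every regime.
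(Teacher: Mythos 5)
Your proposal is correct and follows essentially the same route as the paper's proof: specializing the LP characterization from \cite[Corollary 1]{konda2024optimal} to $\wsc$, identifying $z=1$ and $y\to\infty$ as the binding choices, and reading off $\beta^{-1}$. Your version simply fills in the monotonicity arguments (why $z=1$ dominates and why $h(y)$ is non-decreasing) that the paper states as an observation.
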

\begin{proof}
Examine the linear program in \cite[Corollary 1]{konda2024optimal} with substituting the set covering welfare defined in Eq. \eqref{eq:wscdef}. Under the substitution, the constraint for a given $z$, $y$ simplifies to 
\begin{equation*}
    \beta \geq \sum_{i=1}^{y} \f(i) - z \min_{1 \leq i \leq y + 1} \f(i) + \min(1, z).
\end{equation*}
We have applied the fact $\wsc(j) = \min(1, j) = 1$ when $j \geq 1$. Observe that the binding constraint occurs when we limit $y \to \infty$ and set $z = 1$ (and not $z=0$ since $\f(1) = 1$, the term $1 - \min_j \f(j) \geq 0$). Under the binding constraint, $\pob(\wsc, \f; 1) = \beta^{-1}$, where $\beta^{-1}$ matches the given expression in Eq. \eqref{eq:pobwscf}.
\end{proof}

\begin{figure}[ht]
    \centering
    \includegraphics[width=250pt]{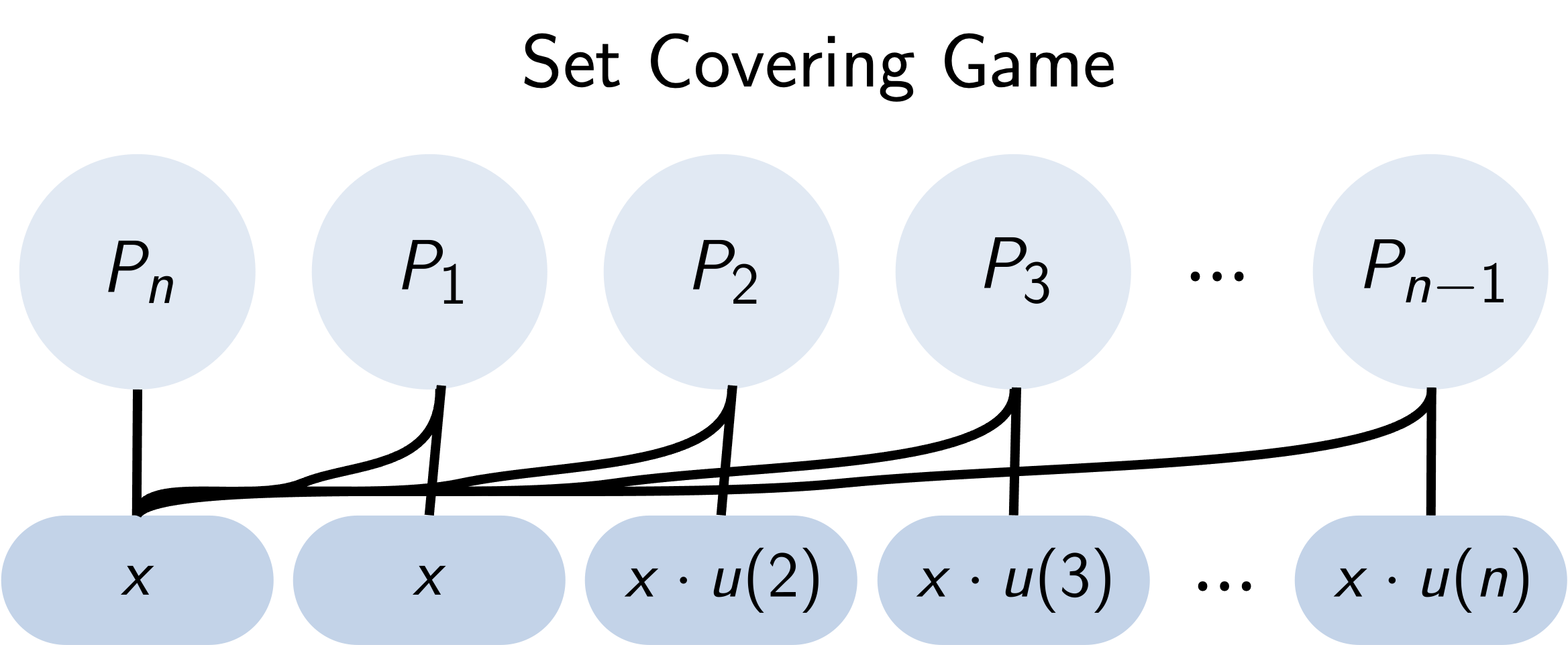}
    \caption{The worst case game construction achieving the one-round walk guarantee dictated by Lemma \ref{lem:setpob}. The agents are represented by circles, each oval represents the number of resources with the set covering welfare, and the black lines represent the agent action selections. In this game, all the agents can either stack on the first resource set or spread out.}
    \label{fig:setcovgame}
\end{figure}

To describe the trade-off, we now provide an explicit expression of Pareto optimal utility rules, i.e., the utility rules $\f$ that satisfy either $\pob(\wsc, \f; 1) \geq \pob(\wsc, \f'; 1)$ or $\pob(\wsc, \f; \infty) \geq \pob(\wsc, \f'; \infty)$ for all $\f'\neq \f$.

\begin{lemma} \label{lem:implicitfboundar}
For a given $\ipoa \geq 0$, a utility rule $
\fxx$ that satisfies $\pob(\wsc, \f; \infty) \geq 1/(1+\ipoa)$ while maximizing $\pob(\wsc, \f; 1)$ is defined as in the following recursive formula:
\begin{equation}
\label{eq:recurf}
\begin{aligned}
    \fxx(1) &= 1  \\
    \fxx(j+1) &= \max\{j \fxx(j) - \ipoa, 0\}.
\end{aligned}
\end{equation}
\end{lemma}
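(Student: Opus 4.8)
The plan is to recast the lemma as a constrained optimization over utility rules $\f$ and to exhibit $\fxx$ as the pointwise-smallest feasible rule, which is simultaneously the maximizer of the one-round efficiency. Two ingredients pin down the objective and the constraint. For the objective, Lemma \ref{lem:setpob} gives $\pob(\wsc, \f; 1) = [\sum_{i}\f(i) - \min_i \f(i) + 1]^{-1}$, so maximizing the one-round efficiency is equivalent to minimizing $\sum_i \f(i) - \min_i \f(i)$. For the constraint, Lemma \ref{lem:poblesspoa} identifies $\pob(\wsc, \f; \infty) = \poa(\wsc, \f)$ (the utility rules are non-increasing), and specializing the price-of-anarchy formula of \cite[Theorem 2]{paccagnan2021utility} to $\wsc$ collapses its double maximum to
\begin{equation*}
\poa(\wsc, \f)^{-1} = 1 + \max_{j \geq 1}\{j\f(j) - \f(j+1)\},
\end{equation*}
since $\wsc(\cdot) \equiv 1$ on the positive integers forces the inner index to its smallest value.

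First I would translate the asymptotic constraint $\pob(\wsc, \f; \infty) \geq 1/(1+\ipoa)$ into a family of pointwise inequalities. Using the displayed identity, the constraint is equivalent to $\max_{j\geq 1}\{j\f(j) - \f(j+1)\} \leq \ipoa$, i.e. $\f(j+1) \geq j\f(j) - \ipoa$ for all $j \geq 1$. Combined with the non-negativity $\f(j+1)\geq 0$ of a utility rule, this reads $\f(j+1) \geq \max\{j\f(j) - \ipoa, 0\}$, which is exactly the recursion defining $\fxx$ with the equality relaxed to ``$\geq$''. This already signals that $\fxx$ should be the extremal feasible rule.

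Next I would run a pointwise-domination induction: for every feasible $\f$ (non-increasing, non-negative, with $\f(1)=1$ and satisfying the above constraint) one has $\fxx(j) \leq \f(j)$ for all $j$. The base case is $\fxx(1) = 1 = \f(1)$; for the step, the hypothesis $\fxx(j) \leq \f(j)$ and monotonicity of $t \mapsto \max\{jt - \ipoa, 0\}$ give $\fxx(j+1) = \max\{j\fxx(j) - \ipoa, 0\} \leq \max\{j\f(j) - \ipoa, 0\} \leq \f(j+1)$. Because lowering $\f(j)$ only loosens the lower bound on $\f(j+1)$, the coordinatewise minimizations do not conflict, so the pointwise-minimal rule $\fxx$ minimizes $\sum_i \f(i)$ over feasible rules. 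To dispose of the $-\min_i \f(i)$ term I would observe that any $\f$ with a positive limit has $\sum_i \f(i) = \infty$ and hence zero one-round efficiency; thus every rule with a finite objective has $\min_i \f(i) = \lim_i \f(i) = 0$, and the objective collapses to $\sum_i \f(i)$, for which pointwise domination yields $\sum_i \fxx(i) \leq \sum_i \f(i)$ directly. Finally I would check that $\fxx$ is itself feasible: the recursion gives $j\fxx(j) - \fxx(j+1) = \min\{j\fxx(j), \ipoa\} \leq \ipoa$ for every $j$, so $\poa(\wsc, \fxx)^{-1} \leq 1 + \ipoa$, and in the relevant range $\fxx$ is non-increasing with $\fxx(1)=1$, hence a bona fide utility rule.

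The main obstacle is the clean reduction of the price of anarchy to the single-index form and the bookkeeping around the $-\min_i \f(i)$ term at the extreme of the feasible range. One must argue that the cited PoA characterization genuinely collapses to $1 + \max_j\{j\f(j) - \f(j+1)\}$ for $\wsc$ (i.e. that the ``both-cover'' constraint dominates all others), and one must treat the boundary $\ipoa = 1/(e-1)$ with care: there the unrolled solution behaves like $\fxx(j) \sim \tfrac{1}{(e-1)j}$, so $\sum_i \fxx(i)$ diverges and the one-round efficiency degenerates to $0$, matching the endpoint $Q = 1 - 1/e$ of the Pareto frontier in Figure \ref{fig:tradepareto}; for $\ipoa < 1/(e-1)$ the constraint is infeasible (no rule beats the optimal asymptotic guarantee $1 - 1/e$) and $\fxx$ stops being non-increasing, so the statement is to be read on the feasible range $\ipoa \geq 1/(e-1)$, with $\fxx = (1,0,0,\dots)$ once $\ipoa \geq 1$.
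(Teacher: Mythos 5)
Your proposal is correct and follows essentially the same route as the paper's proof: both convert the asymptotic constraint into the pointwise inequalities $\f(j+1)\ge j\f(j)-\ipoa$ via the price-of-anarchy characterization and Lemma \ref{lem:poblesspoa}, then run a pointwise-domination induction to show $\fxx$ is the minimal feasible rule and hence maximizes the one-round efficiency of Lemma \ref{lem:setpob}. Your handling of the $-\min_i\f(i)$ term and of the feasible range $\ipoa\ge 1/(e-1)$ is in fact slightly more careful than the paper's, and the ``obstacle'' you flag about the extra $(n-1)\f(n)$ term dissolves because the constraints together with monotonicity already force $(n-1)\f(n)\le\ipoa$.
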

\begin{proof}
We only consider utility rules that are non-increasing in this paper. Under this assumption, from Lemma \ref{lem:poblesspoa}, we have that $\pob(\wsc, \f; \infty) = \poa(\wsc, \f)$. According to Corollary $2$ in \cite{paccagnan2019utility}, the price of anarchy for $n$ agents can be written as 
\begin{equation*}
    \frac{1}{\poa^n(\wsc, \f)} = 1 + \\ \max_{1 \leq j \leq n - 1} \{j\f(j)-\f(j+1), (n-1)\f(n)\}.
\end{equation*}
We define the following constant for each possible utility rule.
\begin{equation}
\label{eq:chiconst}
    \ipoa_{\f} = \max_{1 \leq j \leq n - 1} \{j\f(j)-\f(j+1), (n-1)\f(n)\}.
\end{equation} 
For $\f$ to be Pareto optimal, we claim that $ j \f(j) - \f(j+1) = \ipoa_{\f}$ must hold for all $j$. Consider any other $\f'$ with $\ipoa_{\f} = \ipoa_{\f'}$. It follows that $\poa^n(\wsc, \f) = \poa^n(\wsc, \f') = 1/(1+\ipoa_{
\f})$. By induction, we show that $\f(j) \leq \f'(j)$ for all $j$. The base case is satisfied, as $1 = \f(1) \leq \f'(1) = 1$. Under the assumption $\f(j) \leq \f'(j)$, we also have that 
\begin{equation}
    j\f(j) - \ipoa_{\f} = \f(j+1) \leq \f'(j+1) = j\f'(j) - \ipoa^j_{\f},
\end{equation}
where $\ipoa^j_{\f'} = j\f'(j) - \f'(j+1) \leq \ipoa_{\f'}$ by definition in Eq. \eqref{eq:chiconst}, and so $\f(j) \leq \f'(j)$ for all $j$. Therefore the summation $\sum_{i \in \N}{\f(i)} - \min_{i \in \N}{\f(i)}$ in Eq. \eqref{eq:pobwscf} is diminished when $\f(i) \leq \f'(i)$ for all $i$ and $\pob(\wsc, \f; 1) \geq \pob(\wsc, \f'; 1)$, proving our claim. As $\f$ must satisfy $\f(j) \geq 0$ for all $j$ to be a valid utility rule, $\f(j+1)$ is set to be $\max \{j\f(j) - \ipoa, 0\}$. Then we get the recursive definition for the maximal $\fxx$ in Eq. \eqref{eq:recurf} when we limit $n \to \infty$. Finally, we note that for infinite $n$, $\ipoa < \frac{1}{e-1}$ is not achievable, as shown in \cite{gairing2009covering}.
\end{proof}

\noindent With the two previous lemmas, we can move to proving Theorem \ref{thm:poapobtradeoff}. We first characterize a closed form expression of the maximal utility rule $\fxx$, which is given in Lemma \ref{lem:implicitfboundar}. We fix $\ipoa$ so that $\pob(\wsc, \fxx; \infty) = \frac{1}{\ipoa+1}$ = $Q$. To calculate the expression for $\fxx$ for a given $\ipoa$, a corresponding time varying, discrete time system to Eq. \eqref{eq:recurf} is constructed as follows.
\begin{align*}
    x(t+1) &= t x(t) - \ipoa, \\
    y(t) &= \max \{x(t), 0\}, \\
    x(1) &= 1,
\end{align*}
where $y(t) \equiv \fxx(j)$ corresponds to the utility rule. Solving for the explicit solution for $y(t)$ using the state-transition matrix gives
\begin{align*}
    y(1) &= 1 \\
    y(t) &= \max \Big[ \prod_{\ell=1}^{t-1} \ell - \ipoa \big( \sum_{\tau=1}^{t-2} \prod_{\ell=\tau+1}^{t-1} \ell \big) - \ipoa , 0 \Big] \ \ t > 1.
\end{align*}
Simplifying the expression and substituting for $\fxx(j)$ gives 
\begin{equation*}
    \fxx(j) = \max \Big[ (j-1)!(1 - \ipoa \sum_{\tau=1}^{j-1} \frac{1}{\tau!} \big), 0 \Big] \ \ \ j \geq 1.
\end{equation*}

Substituting the expression for the maximal $\fxx$ into Eq. \eqref{eq:pobwscf} gives the one round efficiency. Notice that for $\ipoa \geq \frac{1}{e-1}$, $\lim_{j \to \infty} \fxx(j) = 0$, and therefore $\min_j{\fxx(j)} = 0$. Shifting the variables $j' = j + 1$, we get the statement in Eq. \eqref{eq:tradeoffsetcov}.

\begin{IEEEbiography}[{\includegraphics[width=1in,height=1.25in,clip,keepaspectratio]{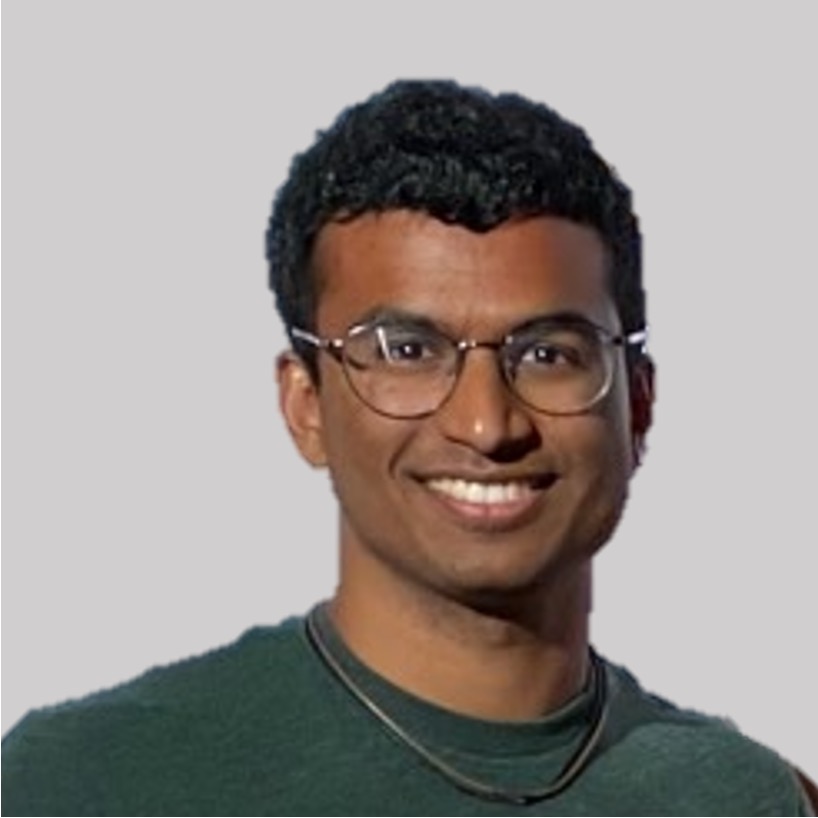}}]{Rohit Konda} received a B.S in Biomedical Engineering in 2018 and an M.S in Electrical and Computer Engineering in 2019 from Georgia Tech. He is currently a Ph.D Candidate in the Department of Electrical and Computer Engineering at the University of California, Santa Barbara. His research interests include distributed optimization and game-theoretic applications in multi-agent systems.
\end{IEEEbiography}

\begin{IEEEbiography}[{\includegraphics[width=1in,height=1.25in,clip,keepaspectratio]{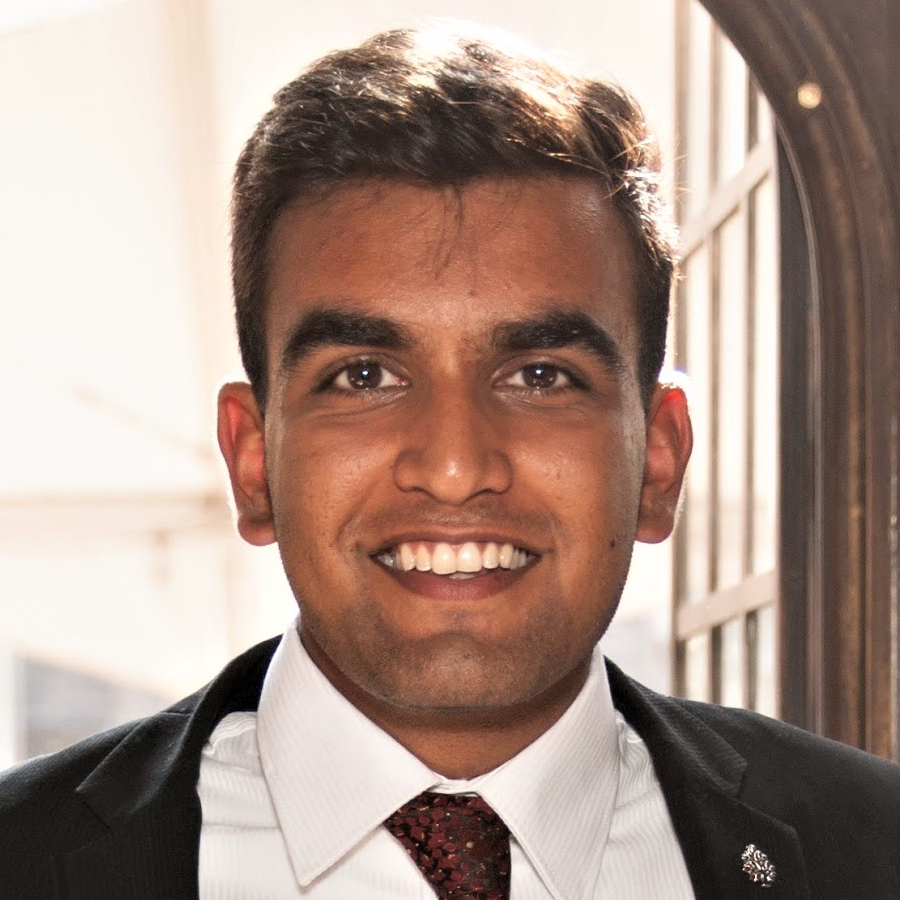}}]{Rahul Chandan} received his B.A.Sc. degree in Engineering Science, and Electrical and Computer Engineering from the University of Toronto, Toronto, ON, Canada, in June 2017. He received his Ph.D. degree in 2022, under the supervision of Jason R. Marden with the Electrical and Computer Engineering Department, University of California, Santa Barbara, Santa Barbara, CA, USA. His research interests include the application of game theoretic and control methods to the analysis and design of multi-agent systems.
\end{IEEEbiography}

\begin{IEEEbiography}[{\includegraphics[width=1in,height=1.25in,clip,keepaspectratio]{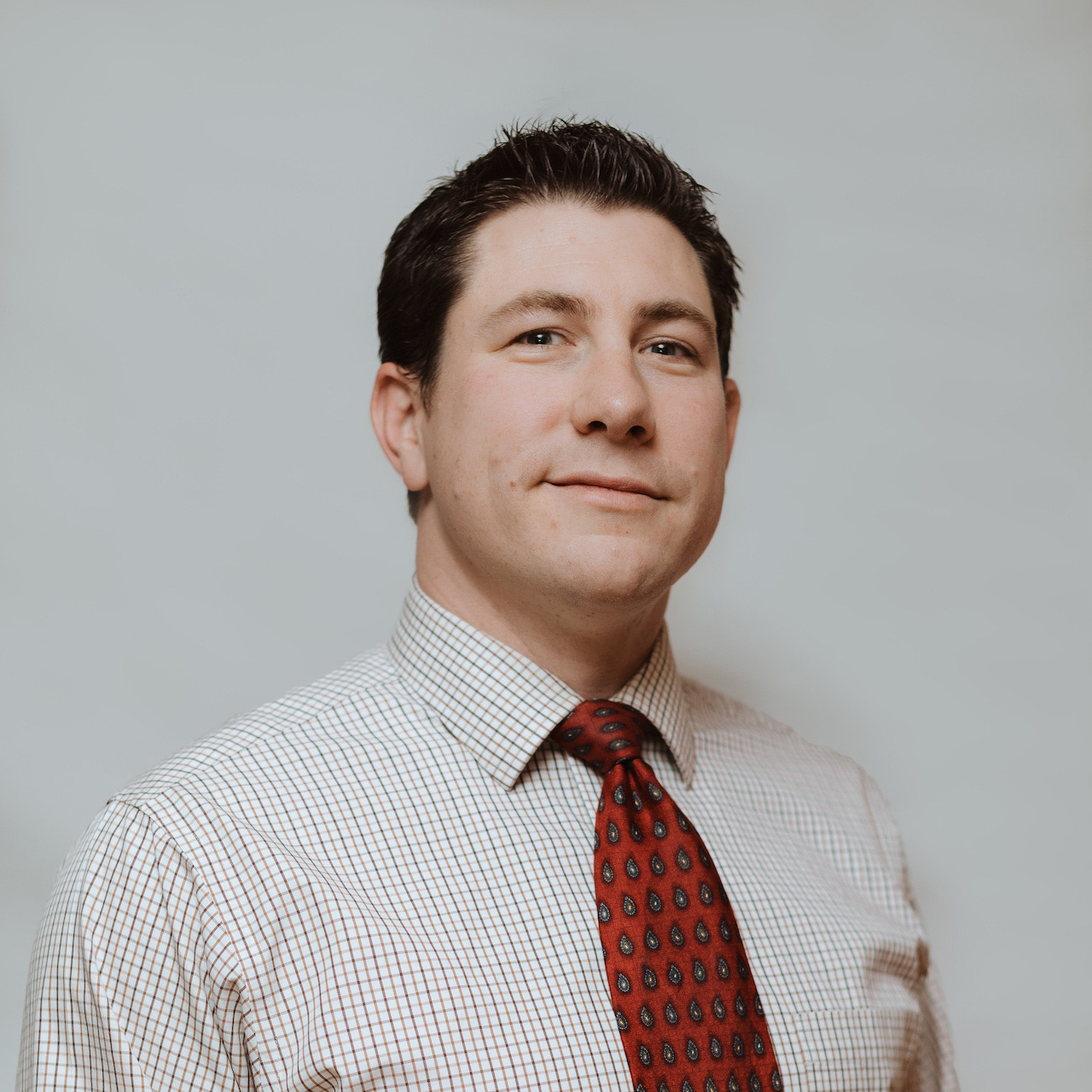}}]{David Grimsman} received the B.S. degree in electrical and computer engineering from Brigham Young University, Provo, UT, USA, in 2006 as a Heritage Scholar, with a focus on signals and systems. He received the M.S. degree in computer science from Brigham Young University in 2016. He received the Ph.D. degree in 2021, under the supervision of Jason R. Marden and Joao P. Hespanha, with the Electrical and Computer Engineering Department, University of California, Santa Barbara, Santa Barbara, CA, USA. He is currently an Assistant Professor with the Department of Computer Science in Brigham Young University.

He worked with BrainStorm, Inc. for several years as a Trainer and
IT Manager. His research interests include multiagent systems, game
theory, distributed optimization, network science, linear systems theory, and security of cyberphysical systems.
\end{IEEEbiography}

\begin{IEEEbiography}[{\includegraphics[width=1in,height=1.25in,clip,keepaspectratio]{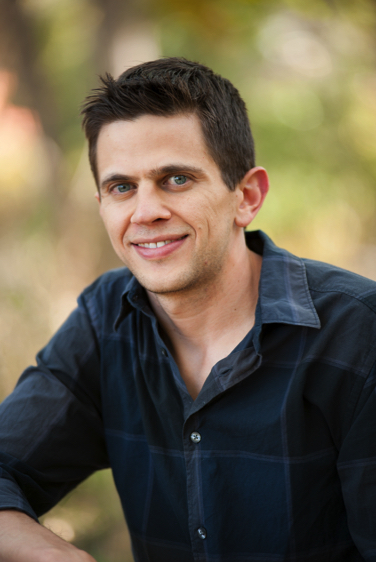}}]{Jason R. Marden} received the B.S. degree in 2001 and the Ph.D. degree in 2007 under the supervision of Jeff S. Shamma from the University of California, Los Angeles (UCLA), Los Angeles, CA, USA, both in mechanical engineering. After graduating from UCLA, he served as a Junior Fellow with the Social and Information Sciences Laboratory, California Institute of Technology until 2010 when he joined the University of Colorado. He is currently a Full Professor with the Department of Electrical and Computer Engineering, University of California, Santa Barbara, CA, USA. His research interests focus on game theoretic methods for the control of distributed multi-agent systems.

Dr. Marden is a recipient of the NSF Career Award in 2014, he was awarded the Outstanding Graduating Ph.D. degree, the ONR Young Investigator Award in 2015, the AFOSR Young Investigator Award in 2012, the American Automatic Control Council Donald P. Eckman Award in 2012, and the SIAG/CST Best SICON Paper Prize in 2015.
\end{IEEEbiography}
}

\end{document}